\newcommand{\half}{\mbox{$\textstyle \frac{1}{2}$}}
\newtheorem{thm}{Theorem}
\newtheorem{rem}[thm]{Remark}
\newtheorem{prop}[thm]{Proposition}
\begin{document}
\newcommand{\p}[1]{\ensuremath{\mathbb{#1}}}
\newcommand{\gb}[1]{\ensuremath{\gamma_{#1 T}}}
\newcommand{\gbb}[1]{\ensuremath{\bar{\gamma}_{#1 T}}}
\newcommand{\levy}{{L\'evy }}
\newcommand{\D}[1]{\ensuremath{\Delta_{#1}}}
\newcommand{\tD}[1]{\ensuremath{\Delta^{\pi}_{#1}}}
\def\1{\mathds{1}}
\def\R{\mathbb{R}}
\def\P{\mathbb{P}}
\def\Q{\mathbb{Q}}
\def\E{\mathbb{E}}
\def\N{\mathbb{N}}
\def\d{\,\mathrm{d}}
\def\dd{\mathrm{d}}
\def\half{\frac{1}{2}}
\def\var{\mathrm{Var}}
\def\cov{\mathrm{Cov}}
\def\th{\theta}
\def\e{\mathrm{e}}
\def\b{\beta}
\def\g{\gamma}
\def\G{\Gamma}
\def\a{\alpha}
\def\ba{\boldsymbol\alpha}
\def\s{\sigma}
\def\l{\lambda}
\def\law{\overset{\textrm{law}}{=}}
\def\tp{\mathrm{T}}
\def\F{\mathcal{F}}
\def\lrb{\mathit{LRB}}
\def\lrbc{\mathit{LRB}_{\mathcal{C}}}
\def\lrbd{\mathit{LRB}_{\mathcal{D}}}
\def\i{\mathrm{i}}
\def\Borel{\mathcal{B}(\R)}
\def\SHS{stable-$\half$ subordinator}
\def\SH{stable-$\half$ }
\def\k{\kappa}
\def\t{\tau}


\title{Stable-$\bf \half$ Bridges and Insurance}

\author{Edward Hoyle$^*$, Lane~P.~Hughston$^{\dagger **}$ and  Andrea Macrina$^{**\mathsection}$}

\affiliation{$^*$Fulcrum Asset Management, 5--7 Chesterfield Gardens, London W1J 5BQ, UK\\ 
 $^{\dagger}$Department of Mathematics, Brunel University London, Uxbridge UB8 3PH, UK \\
 $^{**}$Department of Mathematics, University College London, 
London WC1E 6BT, UK\\ $^{\mathsection}$Department of Actuarial Science, University of Cape Town, Rondebosch 7701, RSA} 
 \date{9 April 2014}

\begin{abstract}
We develop a class of non-life reserving models using a stable-$\half$ random bridge to simulate the accumulation of paid claims, allowing for an essentially arbitrary choice of \emph{a priori} distribution for the ultimate loss.
Taking an information-based approach to the reserving problem, we derive the process of the conditional distribution of the ultimate loss.
The ``best-estimate ultimate loss process" is given by the conditional expectation of the ultimate loss.
We derive explicit expressions for the best-estimate ultimate loss process, and for expected recoveries arising from aggregate excess-of-loss reinsurance treaties.
Use of a deterministic time change allows for the matching of any initial (increasing) development pattern for the paid claims.
We show that these methods are well-suited to the modelling of claims where there is a non-trivial probability of catastrophic loss.
The generalized inverse-Gaussian (GIG) distribution is shown to be a natural choice for the \emph{a priori} ultimate loss distribution.
For particular GIG parameter choices, the best-estimate ultimate loss process can be written as a rational function of the paid-claims process.
We extend the model to include a second paid-claims process, and allow the two processes to be dependent.
The results obtained can be applied to the modelling of multiple lines of business or multiple origin years.
The multi-dimensional model has the property that the dimensionality of calculations remains low, regardless of the number of paid-claims processes.
An algorithm is provided for the simulation of the paid-claims processes.

\begin{center}
{\scriptsize {\bf Non-life reserving, claims development, reinsurance, best estimate of ultimate loss,\\ information-based asset pricing,  L\'evy processes, stable processes.  \\ \vspace{-0.2cm} 
} }
\end{center}
\end{abstract}
\maketitle

\vspace{-0.5cm}
\section{Introduction}
The purpose of this paper is to introduce a class of non-life insurance reserving models based on the use of a stable-$\half$ random bridge to simulate the paid-claims process. Our approach to the non-life reserving problem is to use the methods of information-based asset pricing, as represented, for example, in Brody \textit{et al.}~\cite{BHM1, BHM2, BHM3}, Hoyle \cite{H2010}, Hoyle \textit{et al.}~\cite{HHM1}, and Macrina \cite{MPhD2006}, to formulate the reserves policy in the form of a valuation problem. The term ``stable process" refers here to a strictly stable process with index $\a\in(0,2)$;
thus, we are excluding the case of Brownian motion ($\a=2$).
The use of stable processes for modelling prices in financial markets was proposed by Mandelbrot \cite{BBM1963} in his analysis of cotton futures.
The \levy densities of stable processes exhibit power-law tail decay.
As a result, the behaviour of stable processes is wild, and the trajectories have frequent large jumps.
The variance of a stable random variable is infinite.
If $\a\leq 1$, the expectation either does not exist or is infinite. 
This heavy-tailed behaviour makes stable processes ill-suited to certain applications in finance, such as forecasting and option pricing.
To overcome some of these drawbacks, so-called tempered stable processes have been introduced.
A tempered stable process is a pure-jump \levy process, and its \levy density is the exponentially dampened \levy density of a stable process.
Exponential dampening improves the integrability of the process to the extent that all the moments of a tempered stable process exist.
Tempered stable processes, however, do not possess the time-scaling properties of stable processes.

In this paper we apply stable-$\half$ random bridges to the modelling of cumulative losses.
The techniques presented are equally applicable to cumulative gains.
The integrability of a stable-$\half$ random bridge depends on the integrability of its terminal distribution.
At some fixed future time, the $n$th moment of the process is finite if and only if the $n$th moment of its terminal value is finite.
Thus a stable-$\half$ random bridge with an integrable terminal distribution can be considered to be a dampened stable-$\half$ subordinator.
In fact, the stable-$\half$ random bridge is a generalisation of the tempered stable-$\half$ subordinator.
If the \levy density of a stable-$\half$ subordinator is exponentially dampened, the result is an inverse-Gaussian (IG) process.
We shall see that the IG process is a special case of a stable-$\half$ random bridge.

The non-life reserving problem is in brief as follows. 
An insurance company incurs losses when certain events occur.
An event might be, e.g., a period of high wind, the flooding of a river, or a motor accident. 
The losses are the costs of recompensing policy-holders disadvantaged by an event. 
These costs might cover, e.g.,  repairs to property, replacement of damaged items, loss of business, medical care, and so on.
Although the loss is deemed to have been incurred by the company on the date of the event (the `loss date' or `accident date'), payment is rarely made immediately.
Delays occur because loss is not always immediately reported, because the extent of the costs takes time to emerge, because the company's obligation to pay takes time to establish, and so on.
In return for covering policy-holder risk, the company receives premiums.
The premiums received over a given period should, typically, be sufficient to cover the losses the company incurs over that period.
Since losses can take many years to pay in full, the company sets aside some of the premiums to cover future payments: these are called ``reserves".
If the reserves are set too low, the company may struggle to cover its liabilities, leading to insolvency.
Large increases in the reserves required due to a worsening in the expected future development of liabilities also cause problems.
If the reserves are set too high, shareholders or regulators may complain  that the company is withholding profits.
Thus it is important that the company should work out its ultimate liability as accurately as possible when deciding the level of reserves to set.

We use a stable-$\half$ random bridge to model the paid-claims process (i.e.~cumulative amount paid to date) of an insurance company. 
The losses contributing to the paid-claims process are assumed to have occurred in a fixed interval of time.
Sometimes claims-handling information about individual losses is known, such as that contained in police or loss-adjuster reports.
In the model we present,  the paid-claims process is regarded as providing all relevant information.
We derive the conditional distribution of the company's total liability given the paid-claims process, and then estimate recoveries from reinsurance treaties on the total liability.
The expressions arising in such estimates are similar to the expectations encountered in the pricing of call spreads on stock prices.
We examine the upper tail of the conditional distribution of the ultimate liability,
and show that it is as heavy as the \emph{a priori} tail.
This has an interesting interpretation in the case when the insurer is exposed to a catastrophic loss.
At time $t<T$, the probability of a catastrophic loss occurring in the interval $[t,T]$ decreases as $t$ approaches $T$.
However, in some sense, the \emph{size} of a catastrophic loss does not decrease as $t$ approaches $T$, since the tail of the conditional distribution of the cumulative loss does not thin.
When the \emph{a priori} total loss has a generalized inverse-Gaussian distribution, the model is particularly tractable.
We present a family of special cases where the expected total loss can be expressed as a rational function of the current value of the paid-claims process.
That is, each member of the family is a martingale that can be written as a rational function of an increasing process. 

The model can be extended to more than one paid-claims process.
We consider the case of two processes that are not independent, and  have different activity parameters. There are two ultimate losses to estimate.
We give formulae for the expected values of the ultimate losses given both paid-claims processes.
The numerical computations  are no more difficult than those of the one-dimensional case.
We demonstrate how to calculate the \emph{a priori} correlation between the ultimate liabilities.
The correlation can be used as a calibration tool when one models the cumulative losses arising from related lines of business (e.g.~personal motor and commercial motor).
We also describe how to simulate sample paths of the stable-$\half$ random bridge, and how to use a deterministic time-change to adjust the model when the paid-claims process is expected to develop non-linearly.


\section{\levy processes and stable processes} \label{sec:levy_processes}

\noindent We assume
that the reader is familiar with the theory of L\'evy processes, as discussed, 
e.g., in
\cite{Bert1996}, 
\cite{BH1}, 
\cite{CT2004},
\cite{Kyp2006}, \cite{Sato1999}, \cite {Schoutens2004}. A L\'evy process on a probability space 
$({\Omega},{\mathcal F},{\mathbb Q})$ with filtration $\{{\mathcal F}_t\}_{t\geq0}$ is a process $\{X_t\}_{t\geq0}$ such that $X_0=0$, 
$X_t-X_s$ is independent of ${\mathcal F}_s$ for $t\geq s$, and 
${\mathbb P}(X_t-X_s\leq y) = {\mathbb P}(X_{t+h}-X_{s+h}\leq y)$. 
We shall assume that $\{{\mathcal F}_t\}$ satisfies the usual conditions, and that the various processes under consideration in what follows are right-continuous with left limits.  
If a \levy process $\{S_t\}$ has the scaling property
\begin{equation}
	\label{eq:scaling}
	\{k^{-1/\a}S_{kt}\}_{t\geq 0}\law\{S_t\}_{t\geq 0} \qquad \text{for $k>0$},
\end{equation}
we call it a (strictly) stable process with index $\a$, or a stable-$\a$ process.
It can be shown that for (\ref{eq:scaling}) to hold we must have $\a\in(0,2]$.
A stable-2 process is a scaled Wiener process. A stable-1 process is a Cauchy process.
An increasing \levy process is called a subordinator. If $\{S_t\}$ is a stable-$\a$ subordinator, the Laplace transform of $S_t$ exists and is given by
\begin{equation}
	 \E[\e^{-\l S_t}]=\exp(-\k t \l^{\a}),
\end{equation}
for $\l\geq 0$, where $\k>0$ and $\a$ is further restricted to $\a\in(0,1)$.
Now suppose that $\{S_t\}$ is a \SHS.
The Laplace transform of $S_t$ is for $\l\geq 0$ given by
\begin{equation}
	\E[\e^{-\l S_t}]=\exp\left\{-\frac{ct\sqrt \l}{\sqrt 2}\right\},
\end{equation}
for some $c>0$, and $\{S_t\}$ satisfies the scaling property $\{k^{-2}S_{kt}\}\law \{S_t\}$, for $k>0$.
The random variable $S_t$ has a ``\levy distribution" with density
\begin{equation} 
	\label{eq:StableDen}
	f_t(x)=\1_{\{x>0\}} \,\frac{ct}{\sqrt{2\pi}\,x^{3/2}} \, \exp\left(-\half \frac{c^2t^2}{x}  \right).
\end{equation}
We call $c$ the ``activity parameter".
The density (\ref{eq:StableDen}) is bounded for $t>0$ and is strictly positive for $x>0$.
Integrating (\ref{eq:StableDen}) yields the distribution function
\begin{equation}
	\int_{0}^{x}f_t(y) \d y= 2\, \Phi \!\left[-ctx^{-1/2} \right],
\end{equation}
where $\Phi[x]$ is the normal distribution function.
The random variable $S_t$ has infinite mean: indeed, $\E[S_t^p]<\infty$ if and only if $p<1/2$.
The density of $1/S_t$ is
\begin{equation} 
	x\mapsto \1{\{x>0\}} \frac{ct}{\sqrt 2 \, \G[1/2]}x^{-1/2} \, \exp\left(-\half c^2t^2x  \right).
\end{equation}
Thus the increments of $\{S_t\}$ are distributed as reciprocals of gamma random variables.
Letting $\{W_t\}$ be a Wiener process, we define the so-called exceedence times $\{\tau_t\}_{t\geq 0}$ by setting
	$\tau_t=\inf\{s: W_s>ct\}$.
Then by Feller \cite{Feller2}, X.7, we have
\begin{equation}
	\{S_t\}\law \{\tau_t\}.
\end{equation}

\section{Stable-$\half$ bridges}
\noindent In what follows we make use of the theory of  \levy random bridges set out in Hoyle \textit{et al.}~\cite{HHM1}, where some of their properties are derived.
We call the random bridge of a stable-$\half$ subordinator a stable-$\half$ random bridge.
First we remark on properties of stable-$\half$ bridges, and then in the next section we deduce properties of 
stable-$\half$ random bridges.

We proceed as follows. Fix $z>0$ and let $\{S_{tT}^{(z)}\}_{0\leq t\leq T}$ be a bridge of the process $\{S_t\}$ to the value $z$ and time $T$.
Thus we are led to consider a stable-$\half$ subordinator conditioned to arrive (and terminate) at $z$ at time $T$.
Since \levy bridges are Markov processes, we have the transition law 
\begin{equation}
	\label{eq:markov}
	\Q\left[S_{tT}^{(z)} \in \dd y \left|\,  S_{sT}^{(z)} =x\right. \right]=f_{t-s,T-s}(y-x;z-x) \d y,
\end{equation}
where
\begin{align}
	f_{tT}(y;z)&=\frac{f_t(y)f_{T-t}(z-y)}{f_T(z)} \label{eq:kernelA}
		\\&=\1{\{ 0<y\leq z\}}\frac{1}{\sqrt{{2\pi}}} \frac{ct(T-t)}{T}\frac{\exp\left( -\frac{1}{2}  \frac{c^2(Ty-tz)^2}{yz(z-y)}\right)}{\left(y-y^2/z\right)^{3/2}}.
		\label{eq:kernel}
\end{align}
Note that $f_{tT}(y;z)$ is the density function of the random variable $S^{(z)}_{tT}$.
This density is bounded, and has bounded support, so 
\begin{equation}
	\E\left[\left(S^{(z)}_{tT}\right)^p\right]<\infty \qquad \text{for $p>0$.}
\end{equation}
Integration of (\ref{eq:kernel}) yields the following distribution function for $y\in[0,z]$:
\begin{equation}
F_{tT}(y;z)=
			\Phi\left[\frac{c(Ty-tz)}{\sqrt{yz(z-y)}}\right]+\left(1-\frac{2t}{T} \right)\exp \left( {2c^2t(T-t)/z} \right)\,
				\Phi\left[\frac{c((2t-T)y-tz)}{\sqrt{yz(z-y)}}\right]. \label{eq:DFIGLRB}
\end{equation}

\begin{rem}
\rm {When $t=\half T$, the second term in (\ref{eq:DFIGLRB}) vanishes.
The distribution function is analytically invertible, and we obtain the following identity, where $Z\sim N(0,1)$:}
\begin{equation}
	\label{eq:GT2}
	S^{(z)}_{T/2,T}\law \half z\left(1+\frac{Z}{\sqrt{c^2T^2/z+Z^2}} \right).
\end{equation}
.
\end{rem}

\begin{prop}
	\label{coro:SBscale}
	For fixed $k>0$, the \SH bridge $\{ S^{(z)}_{tT}\}$ satisfies the scaling property
	\begin{equation}
			\left\{ S^{(z)}_{tT} \right\}_{0\leq t \leq T}\law\left\{k^{-2} S^{(k^2z)}_{kt,kT} \right\}_{0\leq t \leq T}.
	\end{equation}
\end{prop}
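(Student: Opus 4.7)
The plan is to deduce the bridge scaling from the scaling property (\ref{eq:scaling}) of the stable-$1/2$ subordinator by arguing that conditioning on the terminal value commutes with the scaling transformation. Let $\{S_t\}$ be the stable-$1/2$ subordinator. By (\ref{eq:scaling}), the processes $\{S_t\}_{0\leq t\leq T}$ and $\{k^{-2}S_{kt}\}_{0\leq t\leq T}$ agree in law (jointly with their terminal values). The bridge $\{S^{(z)}_{tT}\}$ is, by definition, the former conditioned on $S_T=z$. Conditioning the latter on $k^{-2}S_{kT}=z$ gives a process equal in law to $\{S^{(z)}_{tT}\}$. However, the event $\{k^{-2}S_{kT}=z\}$ is identical to $\{S_{kT}=k^2 z\}$, so this conditioned process is $\{k^{-2}S^{(k^2z)}_{kt,kT}\}_{0\leq t\leq T}$ by the definition of the bridge of $\{S_s\}_{0\leq s\leq kT}$ to $k^2z$ at time $kT$. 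Comparing the two identifications of the conditioned process yields the claim.

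The cleanest rigorous route (given the formulas already at hand) is to bypass the conditioning argument and verify the identity at the level of finite-dimensional laws using the Markov property (\ref{eq:markov}) together with the explicit bridge kernel (\ref{eq:kernel}). Concretely, I would check by substitution that the kernel (\ref{eq:kernel}) has the scaling invariance
\begin{equation*}
k^{2}\,f_{k(t-s),\,k(T-s)}\!\bigl(k^{2}(y-x);\,k^{2}(z-x)\bigr)=f_{t-s,\,T-s}(y-x;\,z-x),
\end{equation*}
which is a direct calculation: the prefactor $ct(T-t)/T$ is homogeneous of degree $1$ in $(t,T)$, the exponent $(Ty-tz)^{2}/[yz(z-y)]$ is homogeneous of degree $0$ under joint rescaling of $(t,T,y,z)$, the denominator $(y-y^{2}/z)^{3/2}$ picks up a factor $k^{3}$ under $y,z\mapsto k^{2}y,k^{2}z$, and the Jacobian of $y\mapsto k^{2}y$ contributes the remaining $k^{2}$. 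Combined with the trivial initial match $S^{(z)}_{0T}=0=k^{-2}S^{(k^2 z)}_{0,kT}$, the two Markov processes have identical transition laws and starting point, hence identical finite-dimensional distributions.

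The one genuinely delicate point in the first, more conceptual, approach is the claim that conditioning respects the scaling identity; this requires either an appeal to regular conditional distributions applied to the joint law of path and terminal value, or an approximation argument using conditioning on $\{|S_{T}-z|<\varepsilon\}$ and letting $\varepsilon\downarrow 0$, justified by the continuity and strict positivity of the terminal density $f_{T}$. The density-level verification in the second approach avoids this subtlety entirely, at the modest cost of a mechanical calculation, so I would favour it as the main body of the proof and flag the conditioning picture only as motivation.
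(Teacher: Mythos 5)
Your second, density-level argument is essentially the paper's own proof: both invoke the Markov property of the bridge and verify the scaling invariance of the one-step transition law by direct substitution into the explicit formula, the only difference being that the paper checks this for the distribution function $F_{tT}$ of equation (\ref{eq:DFIGLRB}) while you check it for the density $f_{tT}$ of equation (\ref{eq:kernel}), which is equivalent. Your homogeneity bookkeeping (including the $k^{2}$ Jacobian) is correct, so the argument goes through.
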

\begin{proof}
	The transition probabilities of the two processes are given by
	\begin{align}
		\label{eq:conddist}
		\Q\left[S^{(z)}_{tT} \leq y \left|\, S^{(z)}_{sT}=x \right.\right]&=F_{t-s,T-s}(y-x;z-x),
		\\\Q\left[k^{-2} S^{(k^2z)}_{kt,kT} \leq y \left|\, k^{-2} S^{(k^2z)}_{ks,kT}=x \right.\right]
																																&=F_{k(t-s),k(T-s)}(k^2y-k^2x;k^2z-k^2x),
	\end{align}
	for $0\leq s <t<T$.
	It follows by use of (\ref{eq:DFIGLRB})  that these probabilities are equal.
\end{proof}

\vspace{0.1cm}
\begin{prop}
As $c\rightarrow \infty$ it holds that
	\begin{equation}
			\{S^{(z)}_{tT}\} \overset{\text{law}}{\longrightarrow}\{\tfrac{t}{T} z \}. 
	\end{equation}
\end{prop}
\begin{proof}
	Fix $z>0$.
	It is sufficient to show that
	\begin{equation}
		\lim_{c\rightarrow\infty} F_{tT}(y;z)=\1{\{Ty\geq tz\}} 
	\end{equation}
	for Lebesgue-a.e.~$y\in (0,z)$, since this is equivalent to
	\begin{equation}
		\lim_{c\rightarrow\infty} \Q\left[\left|S^{(z)}_{tT}-\tfrac{t}{T} z \right|<\varepsilon \right]=1 
	\end{equation}
	for all $t\in[0,T]$ and any $\varepsilon>0$.
	Define $\a$ by
	\begin{equation}
		\a=-\frac{(2t-T)y-tz}{\sqrt{yz(z-y)}},
	\end{equation}
	and note that $\a>0$ for $y\in(0,z)$.
	The inequality \citep[7.1.13]{AS1964} states that
	\begin{equation}
		\e^{x^2} \int_{x}^{\infty} \e^{-t^2} \d t\leq \frac{1}{x+\sqrt{x^2+4/\pi}} \qquad (x>0),
	\end{equation}
	from which we deduce
	\begin{align}
		\label{eq:ineq}
		\e^{2c^2t(T-t)/z}\Phi[-\a c]&\leq \exp \left({2c^2t(T-t)/z} \right) \sqrt{\frac{2}{\pi}} \frac{\exp \left({-\a^2c^2/2} \right)}{\a c+\sqrt{\a^2c^2+2/\pi}} \nonumber
		\\ &=\sqrt{\frac{2}{\pi}} \frac{\exp\left(-c^2\frac{(Ty-tz)^2}{2y(z-y)z}\right)}{\a c+\sqrt{\a^2c^2+2/\pi}}.
	\end{align}
	Since the left-hand side of (\ref{eq:ineq}) is positive, we see that
	\begin{equation}
		\lim_{c\rightarrow\infty} \e^{2c^2t(T-t)/z}\Phi[-\a c]=0.
	\end{equation}
	Then we have
	\begin{align}
		\lim_{c\rightarrow\infty} F_{tT}(y;z)&=\lim_{c\rightarrow\infty} \Phi\left[\frac{c(Ty-tz)}{\sqrt{yz(z-y)}}\right] 
				+\left(1-\frac{2t}{T} \right)\lim_{c\rightarrow\infty} \exp \left({2c^2t(T-t)/z} \right)\Phi[-\a c] \nonumber
				\\&= \1_{\{ Ty-tz\geq 0 \}}-\tfrac{1}{2} \1_{\{ Ty=tz \}},
	\end{align}
	which completes the proof.
\end{proof}

We define the incomplete first moment $M_{tT}(y;z)$ of $S^{(z)}_{tT}$ by
\begin{equation}
		M_{tT}(y;z)=\int_{0}^y u \, f_{tT}(u;z) \d u \qquad (0\leq y\leq z).
\end{equation}
Straightforward use of calculus gives
\begin{equation}
	M_{tT}(y;z)=\frac{t}{T}z \left\{ \Phi\left[\frac{c(Ty-tz)}{\sqrt{yz(z-y)}}\right]-\exp \left({2c^2t(T-t)/z}\right)\,
				\Phi\left[\frac{c((2t-T)y-tz)}{\sqrt{yz(z-y)}}\right] \right\}.	
\end{equation}
We can also calculate the second moment of $S^{(z)}_{tT}$.
The result is
\begin{equation}
	\E\left[ \left(S^{(z)}_{tT}\right)^2 \right]=
	\frac{t}{T}z^2\left\{1-c(T-t)\exp \left( {\frac{c^2T^2}{2z}}\right)\sqrt{\frac{2\pi}{z}}\,\Phi\left[-cTz^{-1/2} \right]  \right\}.
\end{equation}
It then follows from equation (\ref{eq:markov}) that for $0\leq s <t <T$ we have
	\begin{equation}
		\label{eq:m1}
		\E\left[S_{tT}^{(z)}\left|\, S_{sT}^{(z)}=x \right.\right]=\frac{T-t}{T-s} x+\frac{t-s}{T-s}z,
	\end{equation}
	and
	\begin{multline}
		\label{eq:m2}
		\E\left[\left.\left(S_{tT}^{(z)}\right)^2 \,\right| S_{sT}^{(z)}=x \right] 
		\\=\frac{t-s}{T-s}(z-x)^2
			\left\{1-c(T-t)\exp \left( {\frac{c^2(T-s)^2}{2(z-x)}} \right)\sqrt{\frac{2\pi}{(z-x)}}\,\Phi\left[-c\frac{T-s}{\sqrt{z-x}} \right]  \right\}.
	\end{multline}
\begin{figure}[ht]
	\begin{center}
		\subfigure[$c=1$.]{\includegraphics[scale=.9]{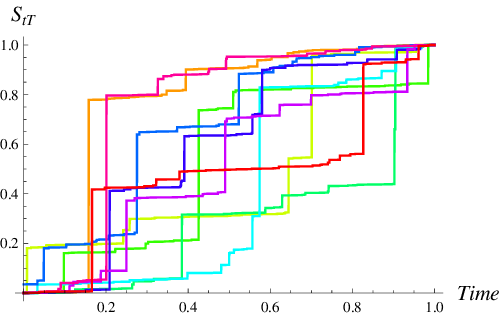}}
		\subfigure[$c=5$.]{\includegraphics[scale=.9]{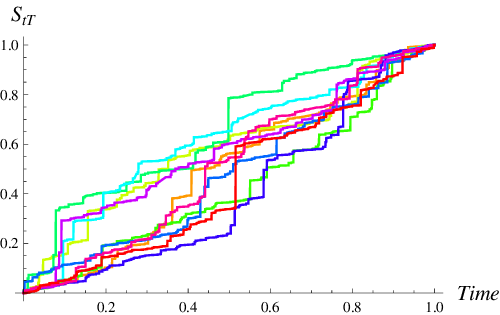}}
	\end{center}
	\caption[Stable-$\half$ bridge simulations]{%
			Simulations of the stable-$\half$ bridge demonstrating the influence of the activity parameter $c$.
			Qualitatively speaking, increasing the value of $c$ decreases the frequency of large jumps, and increases the frequency of small jumps.
			}
\end{figure}

\section{Stable-$\half$ random bridges}
\noindent Let $\nu$ be a probability law on $\R_+$. 
We say that $\{\xi_{tT}\}_{0\leq t \leq T}$ is a stable-$\half$ random bridge with terminal law $\nu$ if $\xi_{TT}$ has law $\nu$,  and 
there exists a stable-$\half$ subordinator $\{S_t\}$ such that
				\begin{equation}
				\label{item:condLevy}
							\Q\left[\xi_{t_1,T}\leq x_1,\ldots,\xi_{t_n,T} \leq x_n \left|\, \xi_{TT} = z \right.\right]
							 =\Q\left[S_{t_1}\leq x_1,\ldots,S_{t_n} \leq x_n \left|\, S_{T} = z \right.\right]
				\end{equation}
				for every $n\in\mathbb{N}_+$, every $0<t_1<\cdots<t_n<T$, every $(x_1,\ldots,x_n)\in\R^n$, and $\nu$-a.e.~$z$.
It is useful to think of $\{\xi_{tT}\}$ as a stable-$\half$ bridge to a random variable with law $\nu$.
The finite-dimensional distributions of $\{\xi_{tT}\}$ are
\begin{equation}
	 \Q\left[ \xi_{t_1,T}\in \dd x_1,\ldots, \xi_{t_n,T}\in\dd x_n, \xi_{TT}\in\dd z \right]= 
 	\prod_{i=1}^{n} \left[f_{t_i-t_{i-1}}(x_i-x_{i-1}) \d x_i\right] \psi_{t_{n}}(\dd z;x_{n}),
\end{equation}
where the (un-normalised) measure $\psi_t(\dd z;\xi)$ is given by
\begin{align}
	\psi_0(\dd z;\xi)&=\nu(\dd z),
	\\\psi_t(\dd z;\xi)&=\frac{f_{T-t}(z-\xi)}{f_T(z)}\nu(\dd z) \nonumber
	\\ &=\1_{\{z>\xi\}}\frac{\left(1-\frac{t}{T}\right)}{ \left(1-\frac{\xi}{z}\right)^{3/2}}
				\exp\left\{\half\left(\frac{c^2T^2}{z}-\frac{c^2(T-t)^2}{z-\xi} \right)  \right\}\,\nu(\dd z),
\end{align}
for $0<t<T$.
It can be shown that $\{\xi_{tT}\}$ is a Markov process with transition law
\begin{equation}
	\label{eq:LRBtranslaw}
		\begin{aligned}
			\Q[\xi_{tT} \in \dd y \,|\, \xi_{sT}=x]&=\frac{\psi_t(\R;y)}{\psi_s(\R;x)} f_{t-s}(y-x)\d y,
			\\\Q[\xi_{TT} \in \dd y \,|\, \xi_{sT}=x]&=\frac{\psi_s(\dd y;x)}{\psi_s(\R;x)},
		\end{aligned}
\end{equation}
for $0\leq s<t<T$. Now fix a time $s<T$ and define a process $\{\eta_{tT}\}_{s\leq t \leq T}$ by
\begin{equation}
	\eta_{tT}=\xi_{tT}-\xi_{sT}.
\end{equation}
Then $\{\eta_{tT}\}$	is a stable-$\half$ bridge with terminal law $f_{T-s}(x)\psi_{T-s}(\R;x) \d x$.
Furthermore, given $\xi_{sT}$, $\{\eta_{tT}\}$ is a stable-$\half$ bridge with terminal law
\begin{equation}
	\nu^*(A)=\frac{\psi_s(A+\xi_{sT};\xi_{sT})}{\psi_s(\R;\xi_{sT})},
\end{equation}
where $A+y$ is the set $\{x:x-y\in A\}$.
This is the ``dynamic consistency" property of stable-$\half$ bridges.
For the financial significance of the dynamic consistency see  \citep{BHM1}, \citep{BHM3},  \citep{HHM1}.

\section{The insurance model}
\noindent We approach the non-life insurance claims reserving problem by modelling the paid-claims process by a stable-$\half$ random bridge.  
We shall look at the problem of calculating the reserves required to cover the losses arising from a single line of business when we observe the paid-claims process. The model can also be used to describe the development of a single claim up to some fixed time $T$. Arjas \cite{EA1989} and Norberg \cite{RN1, RN2} provide detailed descriptions of the problem in a very general setting.
England \& Verrall \cite {EV2002} and W\"uthrich \& Merz \cite{WM2008, WM2013} survey some of the existing actuarial models. 
B\"uhlmann \cite {HB1970} and Mikosch \cite {TK2004} cover a number of important related topics.
The present work ties in also with that of Brody  \textit{et al.}~\cite {BHM3}, who use a gamma random bridge to model a cumulative loss or gain in the context of asset pricing.
The method we use has a flavour of the Bornhuetter-Ferguson model from actuarial mathematics \citep{BF1972} (see also \citep{EV2002}).
In implementing the Bornhuetter-Ferguson model, one begins with an \emph{a priori} estimate for the \emph{ultimate loss} 
(the total cumulative loss arising from the underwritten risks).
Periodically, this estimate is revised using a chain-ladder technique to take into account the \emph{a priori} estimate
and the development of the total paid (or reported) claims to date.
In the proposed model, we assume an \emph{a priori} distribution for the ultimate loss.
By conditioning on the development of the paid-claims process, we revise the ultimate loss distribution by use of a filtering technique.
In this way we continuously update the conditional distribution for the total loss.
This is as opposed to the deterministic Bornhuetter-Ferguson model in which only a point estimate is updated.
Knowledge of the conditional distribution allows one to calculate confidence intervals around the expected loss, and to calculate expected reinsurance recoveries.
Credibility theory uses Bayesian methods to calculate insurance premiums (see, e.g., B\"uhlmann \& Gisler \cite{BG2005}).
In a typical set-up, the premium a policyholder must pay for insurance cover is a functional of their future claims distribution.
The policyholder's future-claims distribution is parameterised by a rating factor $\Theta$, whose value is unknown, but for which we have an \emph{a priori} distribution.
Observation of the policyholder's claims history then leads to an updating of the distribution of $\Theta$, and hence an updating of their premium for future cover.
W\"uthrich \& Merz \cite{WM2008} discuss Bayesian reserving models based on credibility theory.
Credibility models update in discrete time, and the ultimate-loss distribution is updated indirectly through a ``rating'' variable $\Theta$.
This means that it is not straightforward to allow for an arbitrary \emph{a priori} ultimate-loss distribution in such models.
The main assumptions of the stable-$\half$ bridge model are:
\begin{enumerate}
	\item
		The claims arising from the line of business have run off at time $T$.
		That is, at time $T$ all claims have been settled, and the ultimate loss $U_T$ is known.
	\item
		$U_T$ has \emph{a priori} law $\nu$ such that $U_T>0$ and $\E[U_T^2]<\infty$.
	\item
		The paid-claims process $\{\xi_{tT}\}$ is a stable-$\half$ random bridge, and $\xi_{TT}=U_T$.
	\item \label{A4}
		The \emph{best estimate} of the ultimate loss is $U_{tT}=\E\left[U_T \left| \F^{\xi}_t \right.\right]$,
		where $\{\F^{\xi}_t\}$ is the natural filtration of $\{\xi_{tT}\}$.
\end{enumerate}

A few remarks can be made about these assumptions. Runoff at some fixed T is convenient for many lines of business. For example, in the case of motor insurance, T = 10 years is not unreasonable. For some liability classes the time-frames can be much longer.
The use of the natural filtration of $\{\xi_{tT}\}$ as the reserving filtration means that the paid-claims process is the only source of information
about the ultimate loss once the measure $\nu$ is set. We do not consider here the situation where one has access to information about claims that have been reported but not yet paid in full (such as case estimates). The information-based approach is well-suited for dealing with the general case, but we confine the discussion here to the simpler situation. 
The choice of measure $\Q$ with respect to which the expectation is taken is a  delicate issue that cannot be resolved in an unambiguous manner from an actuarial perspective, but admits a reasonably logical treatment from the point of view of finance theory under assumptions of liquidity. One can put the matter as follows. We can think of the reserve required at time $t \in [0, T]$ as being an estimate made at time $t$ of the amount needed to cover the claim payments between time $t$ and time $T$. The issue is that of clarifying the meaning of  the vague expression ``to cover".  It might be that we take $\Q$ to be the real-world measure, and that ``to cover" is interpreted in the sense of  ``to cover in expectation".  But this is problematic, since arguably what we would like to have is something more like ``to cover with reasonably high probability", and this then leads to questions of ``prudence", i.e., what probability figure should one aim for. In practice, insurers tend to interpret the best estimate of ultimate loss as being a real-world calculation, then adjust it upward for prudence. It is this process of ``adjustment" that requires clarification. 

From a  financial point of view, one can ask a different sort of question, which can be answered more precisely. We can ask, at any $t \in [0, T]$, for the price that would have to be paid (by the insurance company)  to enter into a contract where a third party assumes responsibility for the remaining payments made up to $T$. Then, providing that the company always has at hand reserves of that value, it will be guaranteed that the relevant claims can be met. There are two issues that arise with such an interpretation. We need to assume that there is a sufficiently liquid market for such contracts that they can be entered into at short notice without punitive transaction costs; and we need to assume that the probability of a ``major" event occurring in the short interval between the time that the company decides to enter such a contract, and the time that the company actually enters such a contract, is small. But such assumptions are in effect equivalent to the assumptions one usually needs to make in postulating the existence of a so-called pricing operator or ``pricing kernel" in financial markets. So on that basis we can argue that the reserve required at time $t$ is equal to the ``value" of the random payment stream over the interval $[t,T]$. This value is obtained by deflating the random payments by use of the pricing kernel, forming the conditional expectation given information up to $t$, and dividing the result by the pricing kernel at $t$. Use of the  pricing kernel supplies the required discounting and makes an appropriate risk adjustment or change of measure. 

It is the nature of insurance that the end-users of the products are hedgers---that the effect of buying insurance is to eliminate (or  diminish) various negative cash flows that might be encountered by the policy-holder. Thus the buyers of insurance reduce their risk, whereas the sellers of insurance increase their risk. It follows by the usual logic of finance theory that insurance should offer a negative excess rate of return (above the risk-free rate) for buyers of the product. This has the implication that the pricing measure 
$\Q$ assigns a probability to an (unwanted) event that is rather higher than it is in reality, which means that the premium required is also rather higher than it would be on the basis of real-world expectation. The excess premium is the reward that the insurance company gets for assuming the risk of the insured events. The value of the reserve required is therefore greater than the discounted real-world expectation of the future claims payments over the interval $[t,T]$. Going forward we shall assume that such principles are implicitly applied by the market in the pricing of insurance products, and we can call the resulting $\Q$ the \emph{actuarial measure}. 

Finally, we remark on the fact that insurance practitioners, when reserving, will  routinely discount data before modelling.
Discounting may adjust the data for the time-value of money or for the effects of claims inflation.
Claims inflation, and interest rates, though understood to be stochastic, often only result in a comparatively  small amount of uncertainty to the distribution of the ultimate loss, relative to the uncertainty surrounding the frequency and (discounted) sizes of insurance claims.
Furthermore, it is often for practical purposes reasonable to assume that claims inflation and interest rates are independent of claim frequency and size.
Hence, a stochastic reserving model, at least over time horizons that are not too large, may lose little from the assumption that interest rates and inflation rates are deterministic.
We make this assumption, and further assume that the paid-claims process has been appropriately discounted for the effects of interest and inflation. If longer time horizons are considered---involving decades, rather than years---then the situation with discounting requires closer scrutiny; but that will not be our concern here.


\section{Estimating the ultimate loss}
\noindent The conditional law of $U_T$ given information up to time $t$ is
\begin{align}
	\nu_t(\dd z)&=\frac{\psi_t(\dd z;\xi_{tT})}{\psi_t(\R;\xi_{tT})} \nonumber
	\\&=\frac{\1_{\{z>\xi_{tT}\}}\left( \frac{z}{z-\xi_{tT}} \right)^{3/2}
				\exp\left( -\frac{c^2}{2}\left(\frac{(T-t)^2}{z-\xi_{tT}}-\frac{T^2}{z} \right) \right) \nu(\dd z)}
					{\int_{\xi_{tT}}^{\infty} \left( \frac{u}{u-\xi_{tT}} \right)^{3/2}
				\exp\left( -\frac{c^2}{2}\left(\frac{(T-t)^2}{u-\xi_{tT}}-\frac{T^2}{u} \right) \right) \nu(\dd u)}.
\end{align}
The best-estimate ultimate loss is then
\begin{equation}
	U_{tT}=\int_{\xi_{tT}}^{\infty} z \, \nu_t(\dd z).
\end{equation}
At time $t \in [t, T]$, the total amount of claims yet to be paid is $U_T-\xi_{tT}$.
The amount  the insurance company sets aside to cover this quantity is called the \emph{reserve}.
The expectation of the total future payments is called the \emph{best-estimate reserve}, and can be expressed by
\begin{equation}
	R_{tT}=U_{tT}-\xi_{tT}.
\end{equation}
For prudence, the reserve may be greater than the best-estimate reserve.
However, for regulatory reasons it is sometimes required that the best-estimate reserve is reported.
The variance of the total future payments is the variance of the ultimate loss,
which is given by
\begin{equation}
	\var\left[U_T-\xi_{tT}\left| \F^{\xi}_t \right.\right]=\var\left[U_T\left| \F^{\xi}_t \right.\right]=\int_{\xi_{tT}}^{\infty} (z-U_{tT})^2\, \nu_t(\dd z).
\end{equation}

\section{The paid-claims process}
\noindent We shall give expressions for the first two conditional moments of the paid-claims process.
Using equations (\ref{eq:m1}) and (\ref{eq:m2}), and a straightforward conditioning argument, we have
\begin{equation}
	\label{eq:expectation}
	\E\left[\xi_{tT} \left|\, \F^{\xi}_s \right.\right]=\frac{T-t}{T-s}\xi_{sT}+\frac{t-s}{T-s} U_{sT},
\end{equation}
and
\begin{align}
	\E\!\left[\xi_{tT}^2\right]&=\frac{t}{T} \int_0^{\infty}
	z^2\left\{1-c(T-t)\e^{\frac{c^2T^2}{2z}}\sqrt{\frac{2\pi}{z}}\,\Phi\left[-cTz^{-1/2} \right]  \right\} \nu(\dd z) \nonumber
	\\ &=\frac{t}{T} \E\left[U_T^2\right]-c(T-t)\sqrt{2\pi}\int_0^{\infty}
	z^{3/2}\,\e^{\frac{c^2T^2}{2z}}\,\Phi\left[-cTz^{-1/2}\right] \nu(\dd z). \label{eq:var}
\end{align}
Equation (\ref{eq:expectation}) implies that the paid-claims development is expected to be linear.
We return to this point later.
Fix $s<T$ and define the relocated process $\{\eta_{tT}\}_{s\leq t \leq T}$ by
$
	\eta_{tT}=\xi_{tT}-\xi_{sT}.
$
The dynamic consistency property implies that, given $\xi_{sT}$, $\{\eta_{tT}\}$ is
a stable-$\half$ random bridge with marginal law of $\eta_{TT}$ being $\nu^*(A)=\nu_s(A+\xi_{sT})$.
Then we have
\begin{align}
	\E\!\left[\xi_{tT}^2\left|\, \F^{\xi}_s \right.\right] \nonumber
	&=\E\!\left[\eta_{tT}^2\left|\, \xi_{sT} \right.\right]+2\xi_{sT}\,\E\left[\eta_{tT} \left|\, \xi_{sT} \right.\right]+\xi_{sT}^2
	\\&=\frac{T-t}{T-s}\xi_{sT}^2+\frac{t-s}{T-s}\E\left[U^2_T\left|\,\xi_{sT}\right.\right] \nonumber
	\\& \quad-c(T-t)\sqrt{2\pi}\int_{\xi_{sT}}^{\infty}
					(z-\xi_{sT})^{\frac{3}{2}}\,\e^{\frac{c^2(T-s)^2}{2(z-\xi_{sT})}}\,\Phi\!\!\left[-\frac{c(T-s)}{\sqrt{z-\xi_{sT}}} \right]\! \nu_s(\dd z).
\end{align}

\section{Reinsurance}
\noindent An insurance company may buy reinsurance to protect itself against adverse claim developments. The resulting contracts or ``treaties" can be represented in the form of certain classes of derivatives. 
The \emph{stop-loss} and \emph{aggregate excess-of-loss} treaties are two types of reinsurance that cover some or all of the total amount of claims paid over a fixed threshold.
Under a stop-loss treaty, the reinsurance covers all the losses above a prespecified level.
If this level is $K$, then the reinsurance provider pays $(U_T-K)^+$ to the insurance company.
The ``aggregate $L$ excess of $K$" treaty is a capped stop-loss, and covers the layer $[K,K+L]$.
In this case the reinsurance provider pays an amount $(U_T-K)^+-(U_T-K-L)^+$.
The insurance company typically receives money from the reinsurance provider periodically. 
The amount received depends on the amount that they have paid on claims to-date.
If the insurer has the paid-claims process $\{\xi_{tT}\}$, and receives payments from a stop-loss treaty (at level $K$) on the fixed dates $t_1<t_2<\cdots <t_n=T$, then the amount received on date $t_i$ is
\begin{equation}
	\label{eq:RI_payment}
	(\xi_{t_i,T}-K)^+-(\xi_{t_{i-1},T}-K)^+.
\end{equation}
The expected value of payments such as (\ref{eq:RI_payment}) can be calculated using the following:

\begin{prop}
	At time $s<t<T$, the expected exceedence of $\xi_{tT}$ over some fixed $K>0$ is
		\begin{align}
			D_{st}&=\E\left[ (\xi_{tT}-K)^+\left|\, \F^{\xi}_s \right.\right] \nonumber
			\\&=\frac{T-t}{T-s}\xi_{sT}+\frac{t-s}{T-s}U_{sT}-K \nonumber
			\\&\qquad+\1{\{K>\xi_{sT}\}}(K-\xi_{sT})\int_{K}^{\infty}
					F_{t-s,T-s}(K-\xi_{sT};z-\xi_{sT})\, \nu_s(\dd z) \nonumber
			\\&\qquad-\1{\{K>\xi_{sT}\}}\int_{K}^{\infty}M_{t-s,T-s}(K-\xi_{sT};z-\xi_{sT}) \, \nu_s(\dd z).
		\end{align}
\end{prop}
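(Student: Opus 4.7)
The plan is to apply put--call parity, $(\xi_{tT}-K)^+ = (\xi_{tT}-K) + (K-\xi_{tT})^+$, and take conditional expectations with respect to $\F^\xi_s$. The first two terms produce the affine piece $\E[\xi_{tT}\mid\F^\xi_s] - K$, which is already supplied by equation~(\ref{eq:expectation}); the work therefore reduces to evaluating the put-side expectation $\E[(K-\xi_{tT})^+\mid\F^\xi_s]$ by conditioning on the terminal value $U_T$ and integrating against the posterior law $\nu_s$.

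For the inner conditional expectation I would invoke the dynamic consistency and Markov property of the stable-1/2 random bridge: on $\{U_T=z,\,\xi_{sT}=x\}$ the increment $\xi_{tT}-x$ has the law of $S^{(z-x)}_{t-s,T-s}$, i.e.\ a stable-1/2 bridge from $0$ to $z-x$ over an interval of length $T-s$. Writing $K' = K-x$ and $z' = z-x$, the inner expectation becomes $\E[(K' - S^{(z')}_{t-s,T-s})^+]$. In the generic case $0 < K' < z'$ (that is, $\xi_{sT} < K < z$) this expands as $\int_0^{K'}(K'-u)\,f_{t-s,T-s}(u;z')\,\dd u$, and by the definitions of $F_{tT}$ and $M_{tT}$ given earlier in the section this equals $K'\,F_{t-s,T-s}(K';z') - M_{t-s,T-s}(K';z')$.

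The remaining effort is a careful case analysis. If $K \leq \xi_{sT}$, then monotonicity of the subordinator bridge gives $\xi_{tT} \geq \xi_{sT} \geq K$, hence $(K-\xi_{tT})^+ \equiv 0$, so only the affine piece survives; this is what the indicator $\1_{\{K>\xi_{sT}\}}$ in the statement encodes. If $K > \xi_{sT}$, the $\nu_s$-integral naturally splits at $z=K$: for $z>K$ the formula above applies verbatim, while for $\xi_{sT}<z\leq K$ one has $\xi_{tT}\leq z\leq K$ almost surely and the positive part is linear; using the natural boundary extensions ($F\equiv 1$ and $M$ equal to the full first moment $\frac{t-s}{T-s}(z-\xi_{sT})$ beyond the support of the bridge) lets one write a single uniform integrand on $(\xi_{sT},\infty)$. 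Collecting all the pieces and combining with the affine term from~(\ref{eq:expectation}) yields the claimed expression.

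The main obstacle is the bookkeeping: one must track how the contributions from $\nu_s$-mass in $(\xi_{sT},K]$ recombine with the affine term $\frac{T-t}{T-s}\xi_{sT}+\frac{t-s}{T-s}U_{sT}-K$ so that the final integrals collapse to the range $(K,\infty)$ as displayed, rather than to $(\xi_{sT},\infty)$. Beyond this algebraic reconciliation, no analytic input beyond the explicit transition kernel of the stable-1/2 bridge and the moment formulas~(\ref{eq:m1})--(\ref{eq:m2}) is required.
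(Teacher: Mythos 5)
Your put--call decomposition is a legitimate route, essentially dual to the paper's direct expansion of $(\xi_{tT}-K)^+$ against the $\nu_s$-mixture of bridge transition kernels. But the ``recombination'' you assert without verifying is exactly where the argument breaks down: the boundary contributions from $\nu_s$-mass in $(\xi_{sT},K]$ do \emph{not} cancel against the affine term. Carrying your computation to completion (with your natural extensions $F_{t-s,T-s}\equiv 1$ and $M_{t-s,T-s}=\tfrac{t-s}{T-s}(z-\xi_{sT})$ on $\xi_{sT}<z\leq K$) gives, for $K>\xi_{sT}$,
\begin{equation*}
D_{st}=\frac{T-t}{T-s}\xi_{sT}+\frac{t-s}{T-s}U_{sT}-K
+\int_{\xi_{sT}}^{\infty}\Bigl[(K-\xi_{sT})F_{t-s,T-s}(K-\xi_{sT};z-\xi_{sT})-M_{t-s,T-s}(K-\xi_{sT};z-\xi_{sT})\Bigr]\nu_s(\dd z),
\end{equation*}
with the integral running over $(\xi_{sT},\infty)$, not $(K,\infty)$. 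This differs from the displayed proposition by
\begin{equation*}
\int_{\xi_{sT}}^{K}\Bigl[(K-\xi_{sT})-\tfrac{t-s}{T-s}(z-\xi_{sT})\Bigr]\nu_s(\dd z)\;>\;0
\end{equation*}
whenever $\nu_s$ charges $(\xi_{sT},K]$, so the two expressions are not equal.

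In fact your version is the correct one and the proposition as stated is in error. The paper's own proof makes the same slip in its final equality: it replaces $\int_K^\infty\bigl[\tfrac{t-s}{T-s}(z-\xi_{sT})+\xi_{sT}-K\bigr]\nu_s(\dd z)$ by $\tfrac{T-t}{T-s}\xi_{sT}+\tfrac{t-s}{T-s}U_{sT}-K$, but the latter equals the same integral taken over $(\xi_{sT},\infty)$, not $(K,\infty)$. A one-line sanity check exposes the problem: if $\nu_s$ concentrates at a point $z_0\in(\xi_{sT},K)$, then $\xi_{tT}\leq z_0<K$ a.s.\ so $D_{st}=0$, whereas the displayed formula reduces to $\tfrac{T-t}{T-s}\xi_{sT}+\tfrac{t-s}{T-s}z_0-K<0$, which is impossible for the expectation of a nonnegative random variable. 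So the gap in your proposal is not a matter of bookkeeping to be tidied up: the step ``collecting all the pieces\ldots yields the claimed expression'' would fail if you carried it out, and doing so is precisely what reveals the misstatement.
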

\begin{proof}
	If $K\leq \xi_{sT}$ then
	\begin{align}
		\E\left[ (\xi_{tT}-K)^+\left|\, \F^{\xi}_s \right.\right]&=\E\left[ \xi_{tT}\left|\, \F^{\xi}_s \right.\right]-K \nonumber
		\\ &=\frac{T-t}{T-s}\xi_{sT}+\frac{t-s}{T-s}U_{sT}-K.
	\end{align}
	Thus we need only consider the case when $K>\xi_{sT}$.
	The $\F^{\xi}_s$-conditional law of $\xi_{tT}$ is
	\begin{equation}
		\Q[\xi_{tT}\in \dd y \,|\, \F^{\xi}_s]=\frac{\psi_t(\R;\xi_{tT})}{\psi_s(\R;\xi_{sT})}f_{t-s}(y-\xi_{sT}) \d y.
	\end{equation}
	Hence we have
	\begin{align}
		D_{st}&= \frac{1}{\psi_s(\R;\xi_{sT})}\int_K^{\infty}(y-K)\psi_t(\R;y)f_{t-s}(y-\xi_{sT})\d y \nonumber
		\\ &= \frac{1}{\psi_s(\R;\xi_{sT})} \nonumber
				\int_K^{\infty}(y-K)\int_K^{\infty}\frac{f_{T-t}(z-y)}{f_T(z)}\,\nu(\dd z) \, f_{t-s}(y-\xi_{sT})\d y
		\\ &= \frac{1}{\psi_s(\R;\xi_{sT})}\int_{K}^{\infty}\int_{K}^{z}(y-K)\frac{f_{T-t}(z-y)f_{t-s}(y-\xi_{sT})}{f_T(z)} \d y \,\nu(\dd z) \nonumber
		\\ &= \int_{K}^{\infty}\int_{K}^{z}(y-K)f_{t-s,T-s}(y-\xi_{sT};z-\xi_{sT}) \d y \, \nu_s(\dd z).
	\end{align}
	Making the change of variable $x=y-\xi_{sT}$ yields
	\begin{align}
		D_{st} &= \int_{K}^{\infty}\int_{K-\xi_{sT}}^{z-\xi_{sT}}(x+\xi_{sT}-K)f_{t-s,T-s}(x;z-\xi_{sT}) \d x \, \nu_s(\dd z) \nonumber
		\\ &= \int_{K}^{\infty}\left\{\frac{t-s}{T-s}(z-\xi_{sT})-M_{t-s,T-s}(K-\xi_{sT};z-\xi_{sT})\right\} \nu_s(\dd z) \nonumber
		\\ &\qquad\qquad + (\xi_{sT}-K)\int_{K}^{\infty}\left\{1-F_{t-s,T-s}(K-\xi_{sT};z-\xi_{sT})\right\} \nu_s(\dd z) \nonumber
		\\ &= \frac{T-t}{T-s}\xi_{sT}+\frac{t-s}{T-s}U_{sT}-K \nonumber
	+\int_{K}^{\infty}(K-\xi_{sT})F_{t-s,T-s}(K-\xi_{sT};z-\xi_{sT})\, \nu_s(\dd z) \nonumber
		\\&\qquad\qquad-\int_{K}^{\infty}M_{t-s,T-s}(K-\xi_{sT};z-\xi_{sT}) \, \nu_s(\dd z).
	\end{align}
\end{proof}

Suppose the insurance company has limited its liability by entering into a stop-loss reinsurance contract.
At $s\in[0,T)$, the expected reinsurance recovery between $t$ and $u$ is
\begin{equation}
	\label{eq:ReRec}
	\E\left[ (\xi_{uT}-K)^+- (\xi_{tT}-K)^+ \left|\, \F^{\xi}_s \right.\right]=D_{su}-D_{st}, 
\end{equation}
for $s<t<u\leq T$.
Using a similar method to the calculation of $D_{st}$, we can calculate the expectation of $\xi_{tT}$ conditional on it exceeding a threshold.
For a threshold $\th>\xi_{sT}$, we find
\begin{equation}
	\E[\xi_{tT} \,|\, \xi_{sT}, \xi_{tT}>\th]=
		 \frac{\frac{T-t}{T-s}\xi_{sT}+\frac{t-s}{T-s} U_{sT}-\int_{\xi_{sT}}^{\infty}M_{t-s,T-s}(\th-\xi_{sT};z-\xi_{sT}) \,\nu_s(\dd z)}
				{1-\int_{\xi_{sT}}^{\infty}F_{t-s,T-s}(\th-\xi_{sT};z-\xi_{sT}) \,\nu_s(\dd z)}.
\end{equation}
Sometimes called the conditional value-at-risk ($\mathrm{CVaR}$), this expected value is a coherent risk measure, and is a useful tool for risk management 
(McNeil  \textit{et al.} \cite{MFE2005}).
Note that $\mathrm{CVaR}$ is normally defined as an expected value conditional on a shortfall in profit.
Since we are modelling loss, and not profit, the risk we most wish to manage is on the upside.
Hence, conditioning on an exceedence is of greater interest.

\section{Tail behaviour}
\noindent In this section we consider how the probability of extreme events is affected by the paid-claims development.
Suppose that the line of business we are modelling is exposed to rare but ``catastrophic" large loss events.
In this case we assume that the \emph{a priori} distribution of the ultimate loss has a heavy right tail.
If a catastrophic loss could hit the insurance company at any time before runoff, then it is important that any conditional distributions for the ultimate loss
retain the heavy-tail property.
We shall see that in the stable-$\half$ random bridge model  the conditional distributions are as heavy-tailed as the \emph{a priori} distribution.

Assume that $U_T$ has a continuous density $p(z)$ that is positive for all $z$ above some threshold.
Then the value of $U_T$ is unbounded in the sense that
\begin{equation}
	\Q[U_T>x]>0, \quad\text{for all $x\in\R$.}
\end{equation}
Define
\begin{equation}
	\mathrm{Tail}_t=
		\lim_{L\rightarrow\infty} \frac{\Q\left[ \xi_{TT}>L \right]}{\Q\left[ \xi_{TT}-\xi_{tT}>L \left|\, \xi_{tT} \right.\right]}.
\end{equation}
If $\mathrm{Tail}_t=\infty$ then the tail of the future-payments distribution at time $t>0$ is not as heavy as the \emph{a priori} tail.
That is, a catastrophic loss at time $t$ is ``smaller" than a catastrophic loss at time 0.
If $\mathrm{Tail}_t=0$ then the tail of the future-payments distribution is greater at time $t$ than \emph{a priori}.
If $0<\mathrm{Tail}_t<\infty$ then the tail is as heavy at time $t$ as \emph{a priori}.
Using l'H\^opital's rule, we have
\begin{align}
	\mathrm{Tail}_t
			&= \lim_{L\rightarrow\infty}
			\frac{\psi_t(\R;\xi_{tT})\int_{L}^{\infty} p(z) \d z} 
			{\int_{L+\xi_{tT}}^{\infty} \left( \frac{z}{z-\xi_{tT}} \right)^{3/2}
				\exp\left( -\half c^2 \left(\frac{(T-t)^2}{z-\xi_{tT}}-\frac{T^2}{z} \right) \right) p(z) \d z} \nonumber
			\\&=  \lim_{L\rightarrow\infty}
			\frac{\psi_t(\R;\xi_{tT})\,p(L)}
			{\left( \frac{L+\xi_{tT}}{L} \right)^{3/2}
				\exp\left( -\half c^2 \left(\frac{(T-t)^2}{L}-\frac{T^2}{L+\xi_{tT}} \right) \right)p(L+\xi_{tT})} \nonumber
			\\&={\psi_t(\R;\xi_{tT})}\,\lim_{L\rightarrow\infty}\frac{p(L)}{p(L+\xi_{tT})},
\end{align}
for $t\in(0,T)$.
Some examples include:
\begin{enumerate}
	\item
		If $p(z)\propto \1_{\{z>0\}}\e^{-z}$ (exponential) then $\mathrm{Tail}_t=\psi_t(\R;\xi_{tT})\,\exp {\xi_{tT}}$.
	\item
		If $p(z)\propto \1_{\{z>0\}}\e^{-z^2}$ (half-normal) then $\mathrm{Tail}_t=\psi_t(\R;\xi_{tT})\,\exp {\xi_{tT}^2}$.
	\item
		If $p(z)\propto \1_{\{z>0\}}z^{-3/2} \e^{-1/z}$ (L\'evy) then $\mathrm{Tail}_t=\psi_t(\R;\xi_{tT})$.
\end{enumerate}
This property has an interesting parallel with the \emph{subexponential} distributions.
By definition, a random variable $X$ has a subexponential distribution if
\begin{equation}
	\lim_{L\rightarrow\infty} \frac{\Q\left[ \sum_{i=1}^n X_i>L \right]}{\Q\left[ X>L \right]} = n,
\end{equation}
where $\{X_i\}_{i=1}^n$ are independent copies of $X$ (Embrechts \textit{et al.} \cite{EKM1997}).
We note that
\begin{equation}
\lim_{L\rightarrow\infty} \frac{\Q\left[ Z_{T}>L \right]}{\Q\left[ Z_T-Z_t>L \left|\, Z_t \right.\right]}=\infty,
\end{equation}
for $\{Z_t\}$ a Brownian motion, a geometric Brownian motion, or a gamma process.
If $\{Z_t\}$ is a stable-$\half$ subordinator, so the increments of $\{Z_t\}$ are subexponential, then
\begin{equation}
\lim_{L\rightarrow\infty} \frac{\Q\left[ Z_{T}>L \right]}{\Q\left[ Z_T-Z_t>L \left|\, Z_t \right.\right]}=\frac{T}{T-t}.
\end{equation}

\section{Generalized inverse-Gaussian prior}
\noindent The three-parameter generalized inverse-Gaussian (GIG) distribution on the positive half-line has a density of the following form
(J\o rgensen \cite {Jorg1982}, Eberlein \& von Hammerstein \cite{EH2004}):
\begin{equation}
	\label{eq:GIGdensity}
	 f_{\textit{GIG}}(x;\l,\delta,\g)= 
	 \1_{\{x>0\}} \left( \frac{\g}{\delta}\right)^{\l} \frac{1}{2\,K_{\l}[\g\delta]}x^{\l-1}\exp\left(-\tfrac{1}{2}(\delta^2x^{-1}+\g^2x) \right).
\end{equation}
Here $K_{\nu}[z]$ denotes the modified Bessel function \cite{AS1964}.
The permitted parameter values are
\begin{align}
	&\delta\geq 0, && \g>0, 		&&\text{if $\l>0$,}
	\\ &\delta> 0, && \g>0, 		&&\text{if $\l=0$,}
	\\ &\delta> 0, && \g\geq 0, &&\text{if $\l<0$.}
\end{align}
If $\l>0$, the limit $\delta\rightarrow 0^+$ gives the gamma distribution.
If $\l<0$, the limit $\g\rightarrow 0^+$ yields the reciprocal-gamma distribution---this includes the \levy distribution for $\l=-\half$
(recall that the \levy distribution is the increment distribution of \SH subordinators).
The case $\l=-\half$ and $\g>0$ corresponds to the IG distribution.
If $X$ has the density (\ref{eq:GIGdensity}) then the moment $\mu_k=\E[X^k]$ is given by
\begin{align}
	\mu_k&=\frac{K_{\l+k}[\g\delta]}{K_{\l}[\g\delta]}\left(\frac{\delta}{\g} \right)^k \qquad \text{for $\l\in\R$, $\delta>0$, $\g>0$},
	\\\mu_k&=\left\{
					\begin{aligned}
						&\frac{\G[\l+k]}{\G[\l]} \left(\frac{2}{\g^2} \right)^k && k>-\l
						\\ &\infty && k\leq-\l
					\end{aligned}
				\right. \text{and $\l>0$, $\delta=0$, $\g>0$,}
	\\\mu_k&=\left\{
					\begin{aligned}
						&\frac{\G[-\l-k]}{\G[-\l]} \left(\frac{\delta^2}{2} \right)^k && k<-\l
						\\ &\infty && k\geq-\l
					\end{aligned}
				\right. \text{and $\l<0$, $\delta>0$, $\g=0$.}
\end{align}
The following identity is useful \citep[10.2.15]{AS1964}:
\begin{equation}
	\label{eq:BesselHalfInt}
	K_{ n+\frac{1}{2}}[z]= \sqrt{\tfrac{1}{2} \pi /z} \,\,\e^{-z} \sum_{j=0}^{n} (n+\tfrac{1}{2},j) (2z)^{-j}, \quad\text{for $n\in\mathbb{N}$,}
\end{equation}
where $(n + \half,j)$ denotes the Hankel symbol,
\begin{equation}
	(n + \tfrac{1} {2}, j )=\frac{ (n + j)!} {j!\, \G[n - j + 1]}.
\end{equation}
The IG process is a \levy process with increment density
\begin{equation}
		q_t(x)=\1{\{x>0\}} \frac{c t}{\sqrt{2\pi}}\frac{1}{x^{3/2}}\exp\left(-\half \frac{\g^2}{x}\left(x-\tfrac{c}{\g}t\right)^2 \right).
\end{equation}
We see that $q_t(x)=f_{\textit{GIG}}(x;-\tfrac{1}{2},ct,\g)$.
The $k$th moment of $q_t(x)$ is
\begin{equation}
	m_t^{(k)}=\sqrt{\frac{2}{\pi}}\,\g\e^{\g c t}\left(\frac{c t}{\g} \right)^{k+\frac{1}{2}} K_{k-1/2}[\g c t],
\end{equation}
for $k>0$.
Using (\ref{eq:BesselHalfInt}), we find that the first four integer moments simplify to
\begin{align}
	m_t^{(1)}&=\frac{c t}{\g},
	\\ m_t^{(2)}&=\frac{c t}{\g^3}(1+\g c t),
	\\ m_t^{(3)}&=\frac{c t}{\g^5}(3+3\g c t+\g^2 c^2 t^2),
	\\ m_t^{(4)}&=\frac{c t}{\g^7}(15+15\g c t+6 \g^2 c^2 t^2+\g^3 c^3t^3).
\end{align}

\subsection{GIG terminal distribution}
\noindent The GIG distributions constitute a natural class of \emph{a priori} distributions for the ultimate loss.
With $\g>0$ and $c>0$ fixed, we examine some properties of a paid-claims process $\{\xi_{tT}\}$ with time-$T$ density $f_{\textit{GIG}}(z;\l,cT,\g)$.
The transition law is
\begin{align}
		\Q[\xi_{tT} \in \dd y \,|\, \xi_{sT}=x]&=\frac{\psi_t(\R;y)}{\psi_s(\R;x)} f_{t-s}(y-x)\d y,
		\\\Q[\xi_{TT} \in \dd y \,|\, \xi_{sT}=x]&=\frac{\psi_s (\dd y;x)}{\psi_s(\R;x)},
\end{align}
where
\begin{align}
	\psi_0(\dd z;\xi)&=f_{\textit{GIG}}(z;\l, c T,\g) \d z,
	\\ \psi_t(\dd z;\xi)&=(1-\tfrac{t}{T}) \1_{\{z>\xi\}} 
				\frac{\exp\left({-\half c^2\left(\frac{(T-t)^2}{z-\xi}-\frac{T^2}{z} \right)}\right)}{({1-\xi/z})^{3/2}} 
				\,f_{\textit{GIG}}(z;\l,c T,\g) \d z.
\end{align}
Writing
\begin{equation}
	\kappa=\left(\frac{\g}{c T}\right)^{\l} \frac{1}{2\, K_{\l}[\g\sqrt T ]},
\end{equation}
we have
\begin{align}
	\psi_t(\R;y)&=\kappa(1-\tfrac{t}{T})\e^{-\frac{1}{2}\g^2 y} \int_{y}^{\infty}z^{\l+\frac{1}{2}}\,
			\frac{\e^{-\half c^2\frac{(T-t)^2}{z-y}-\frac{1}{2}\g^2(z-y)}}{(z-y)^{3/2}} \d z \nonumber
	\\ &=\kappa(1-\tfrac{t}{T})\e^{-\frac{1}{2}\g^2 y} \int_{0}^{\infty}(z+y)^{\l+\frac{1}{2}}\,
			\frac{\e^{-\half c^2 \frac{(T-t)^2}{z}-\frac{1}{2}\g^2z}}{z^{3/2}} \d z \nonumber
	\\ &=\frac{\kappa\sqrt{2\pi}}{cT}\e^{-\frac{1}{2}\g^2 y- \g c(T-t)} \int_{0}^{\infty}(z+y)^{\l+\frac{1}{2}}
			q_{T-t}(z) \d z.
\end{align}
Given $\xi_{tT}=y$, the best-estimate ultimate loss is
\begin{align}
	U_{tT}=\psi_t(\R;y)^{-1}\int_{y}^{\infty} z \, \psi_t(\dd z;y)
	 =\frac{\int_0^{\infty}(z+y)^{\l+\frac{3}{2}} q_{T-t}(z) \d z}{\int_0^{\infty}(z+y)^{\l+\half} q_{T-t}(z) \d z}.
\end{align}

\subsection{The case $\lambda =-\half$}
\noindent When $\l=-1/2$ we have
\begin{align}
	\frac{\psi_t(\R;y)}{\psi_s(\R;x)} f_{t-s}(y-x) 
		&=\1_{\{y-x>0\}}\frac{1}{\sqrt{2\pi}}\frac{c(t-s)}{(y-x)^{3/2}}
		\exp\left(-\frac{\g^2}{2}\frac{\left((y-x)- c(t-s)/\g \right)^2}{y-x} \right) \nonumber
	\\	& =q_{t-s}(y-x).
\end{align}
Thus $\{\xi_{tT}\}$ is an IG process.
Note that in this case $\{\xi_{tT}\}$ has independent increments.

\subsection{The case $\l=n-\frac{1}{2}$}
\noindent Here we consider the case where $\l=n-\half$, for $n\in\mathbb{N_+}$.
For convenience we write
\begin{equation}
	q^{(k)}_t(x)=f_{\textit{GIG}}(x;k-1/2,ct,\g).
\end{equation}
Hence one has $q^{(0)}_t(x)=q_t(x)$.
The transition density of $\{\xi_{tT}\}$ is then
\begin{align}
	\frac{\psi_t(\R;y)}{\psi_s(\R;x)} f_{t-s}(y-x) 
	 &=q_{t-s}(y-x)\frac{\int_{0}^{\infty}(z+y)^n q_{T-t}(z) \d z}{\int_{0}^{\infty}(z+x)^n q_{T-s}(z) \d z} \nonumber
	\\	& =q_{t-s}(y-x)\frac{\sum_{k=0}^n\binom{n}{k} m_{T-t}^{(n-k)} \,y^{k}}{\sum_{k=0}^n\binom{n}{k} m_{T-s}^{(n-k)} \,x^{k}}.
\end{align}
When $n=1$ this is
\begin{align}
	\frac{\psi_t(\R;y)}{\psi_s(\R;x)} f_{t-s}(y-x)&=q_{t-s}(y-x)\frac{y+\frac{c}{\g} (T-t)}{x+\frac{c}{\g} (T-s)} \nonumber
				\\&= \left(1- \frac{c (t-s)}{\g x+ c (T-s) }\right) q_{t-s}^{(0)}(y-x) \nonumber
				\\&\qquad\qquad+\left( \frac{ c (t-s)}{\g x+ c (T-s)} \right) q_{t-s}^{(1)}(y-x).
\end{align}
Thus the increment density is a weighted sum of GIG densities.
We shall derive a weighted sum representation for general $n$.
We can write
\begin{align}
	\int_{0}^{\infty}(z+y)^n \,q_{T-t}(z) \d z
	&=\int_{0}^{\infty}((z+x)+(y-x))^n\, q_{T-t}(z) \d z \nonumber
	\\&=\sum_{k=0}^n \binom{n}{k} (y-x)^{n-k} \int_{0}^{\infty}(z+x)^{k} \,q_{T-t}(z) \d z \nonumber
	\\&=\sum_{k=0}^n \binom{n}{k} (y-x)^{n-k} \sum_{j=0}^k \binom{k}{j}m^{(k-j)}_{T-t}\, x^{k}.
\end{align}
Then we have
\begin{align}
	\frac{\psi_t(\R;y)}{\psi_s(\R;x)} f_{t-s}(y-x) 
	 &=q_{t-s}(y-x)\frac{\int_{0}^{\infty}(z+y)^n q_{T-t}(z) \d z}{\int_{0}^{\infty}(z+x)^n q_{T-s}(z) \d z} \nonumber
	\\  &=q_{t-s}(y-x)\frac{\sum_{k=0}^n \binom{n}{k} (y-x)^{n-k} 
					\sum_{j=0}^k \binom{k}{j}m^{(k-j)}_{T-t}\, x^{j}}{\sum_{k=0}^n\binom{n}{k} m^{(n-k)}_{T-s} \,x^{k}}.\label{eq:IGSB_inc_den}
\end{align}
However, when $k\in\mathbb{N}_0$,
\begin{align}
	\frac{z^{k}\,q_{t-s}(z)}{q_{t-s}^{(k)}(z)}&=\frac{z^{k}\,f_{\textit{GIG}}(z;-1/2, c (t-s),\g)}{f_{\textit{GIG}}(z;k-1/2,c (t-s),\g)} \nonumber
		\\ &=\left( \frac{ c (t-s)}{\g} \right)^{k} \frac{K_{k-1/2}[\g c (t-s)]}{K_{1/2}[\g c (t-s)]} \nonumber
		\\ &=m^{(k)}_{t-s}. 
\end{align}
Thus we have

\begin{equation}
	\label{eq:den_id}
	(y-x)^{n-k}\, q_{t-s}(y-x)=m^{(n-k)}_{t-s} \, q_{t-s}^{(n-k)}(y-x).
\end{equation}
\\
By use of the identity (\ref{eq:den_id}), (\ref{eq:IGSB_inc_den}) can be expanded to give
\begin{equation}
	\frac{\psi_t(\R;y)}{\psi_s(\R;x)} f_{t-s}(y-x) =\sum_{k=0}^n w^{(k)}_{st}(x) \, q_{t-s}^{(k)}(y-x),
\end{equation}
where
\begin{equation}
	w^{(k)}_{st}(x)=\frac{\binom{n}{k} m^{(n-k)}_{t-s} \sum_{j=0}^k \binom{k}{j}m^{(k-j)}_{T-t}\, x^{j}}
		{\sum_{j=0}^n\binom{n}{j} m^{(n-j)}_{T-s} \,x^{j}}.
\end{equation}
Note that $w^{(k)}_{st}(x)$ is a rational function. The denominator is a polynomial of order $n$. The numerator is a polynomial of order $k\leq n$.
The transition probabilities of $\{\xi_{tT}\}$ depend on the first $n$ integer powers of the current value.
The conditional law of the ultimate loss is

\begin{equation}
	\frac{\psi_s(\dd y;\xi_{sT})}{\psi_s(\R;\xi_{sT})}=\frac{y^n q^{(0)}_{T-s}(y-\xi_{sT})}{\sum_{k=0}^n \xi_{sT}^k \, m_{T-s}^{(n-k)}} \d y.
\end{equation}
\\
One can verify that $\sum_{k=0}^n w^{(k)}_{st}(x)=1$ using the fact that IG densities are closed under convolution.
We have
\begin{equation}
	q_{T-s}(z)=\int_{0}^{z} q_{T-t}(y) q_{t-s}(z-y) \d y, \quad \text{for $0\leq s<t<T$}.
\end{equation}
For fixed $n\in\mathbb{N}_+$, we then have
\begin{align}
	\sum_{k=0}^n\binom{n}{k} m^{(n-k)}_{T-s} \,x^{k}&=\int_0^{\infty} (z+x)^n q_{T-s}(z) \d z \nonumber
	\\ &= \int_0^{\infty} (z+x)^n \int_{0}^{z} q_{T-t}(y) q_{t-s}(z-y) \d y \d z \nonumber
	\\ &= \int_{0}^{\infty} q_{T-t}(y) \int_y^{\infty} (z+x)^n \, q_{t-s}(z-y)  \d z \d y \nonumber
	\\ &= \int_{0}^{\infty} q_{T-t}(y) \int_0^{\infty} (z+y+x)^n \, q_{t-s}(z)  \d z \d y \nonumber
	\\ &= \int_{0}^{\infty} q_{t-s}(y) \left[ \sum_{k=0}^n \binom{n}{k} m^{(n-k)}_{t-s} (y+x)^{k} \right] \d y \nonumber
	\\ &= \sum_{k=0}^n \binom{n}{k} m^{(n-k)}_{t-s} \int_{0}^{\infty} (y+x)^{k} \,q_{t-s}(y) \d y \nonumber
	\\ &= \sum_{k=0}^n \binom{n}{k} m^{(n-k)}_{t-s} \sum_{j=0}^{k} \binom{k}{j} m^{(k-j)}_{T-t} x^{j},
\end{align}
which gives
\begin{equation}
	\sum_{k=0}^n\frac{ \binom{n}{k} m^{(n-k)}_{t-s} \sum_{j=0}^{k} \binom{k}{j} m^{(k-j)}_{T-t} x^{j}}
		{\sum_{j=0}^n\binom{n}{j} m^{(n-j)}_{T-s} \,x^{j}}=1.
\end{equation}

\subsection{Moments of the paid-claims process}
\noindent The best-estimate ultimate loss simplifies to
\begin{equation}
	U_{tT}=\frac{\sum_{k=0}^{n+1}\binom{n+1}{k}m^{(n+1-k)}_{T-t} \xi_{tT}^k}{\sum_{k=0}^{n}\binom{n}{k}m^{(n-k)}_{T-t} \xi_{tT}^k}.
\end{equation}
For example, when $n=1$ we obtain
\begin{equation}
	U_{tT}=\frac{c(T-t)(1+\g c(T-t))+2\g^2 c(T-t)\xi_{tT}+\g^3 \xi_{tT}^2}{\g^2c(T-t)+\g^3\xi_{tT}}.
\end{equation}
By similar calculations, we have
\begin{equation}
	\E[\xi_{TT}^m\,|\,\xi_{tT}]=\frac{\sum_{k=0}^{n+m}\binom{n+m}{k}m^{(n+m-k)}_{T-t} \xi_{tT}^k}{\sum_{k=0}^{n}\binom{n}{k}m^{(n-k)}_{T-t} \xi_{tT}^k}
	\quad \text{for $m\in\mathbb{N_+}$,}
\end{equation}
and
\begin{equation}
	\E\left[\left.\e^{\half\a^2\xi_{TT}} \,\right| \xi_{tT}\right]=
	\\	\frac{\sum_{k=0}^{n}\binom{n}{k}\bar{m}^{(n-k)}_{T-t} \xi_{tT}^k}{\sum_{k=0}^{n}\binom{n}{k}m^{(n-k)}_{T-t} \xi_{tT}^k}
					\, \exp\left(\tfrac{1}{2} \a^2\xi_{tT}-(T-t)(\bar{\g}-\g)\right),
\end{equation}
for $0<\a<\g$, where $\bar{\g}=\sqrt{\g^2-\a^2}$, and $\bar{m}^{(k)}_{t}$ is the $k$th moment of the IG distribution with parameters $\delta=ct$ and $\g=\bar{\g}$.

\section{Exposure adjustment}
\noindent We have seen that 
\begin{equation}
	\E[\xi_{tT}]=\frac{t}{T}\E[U_T].
\end{equation} 
Thus in the model so far the development of the paid-claims process is expected to be linear.
This is not always the case in practice.
In some situations the marginal exposure is (strictly) decreasing as the development approaches runoff.
This manifests itself in the form
\begin{equation}
	\frac{\partial^2}{\partial t^2} \E[\xi_{tT}]<0,
\end{equation}
for $t$ close to $T$.
A straightforward method to adjust the development pattern is through a time change.
We describe the marginal exposure of the insurer through time by a deterministic function $\varepsilon:[0,T]\rightarrow\R_+$.
The total exposure of the insurer is
\begin{equation}
	\int_{0}^T \varepsilon(s) \d s.
\end{equation}
We define the increasing function $\tau(t)$ by
\begin{equation}
	\tau(t)=T\frac{\int_0^t \varepsilon(s) \d s}{\int_0^T \varepsilon(s) \d s}.
\end{equation}
By construction $\tau(0)=0$ and $\tau(T)=T$.
Now let $\tau(t)$ determine the \emph{operational time} in the model.
We define the time-changed paid-claims process $\{\xi^{\t}_{tT}\}$ by
\begin{equation}
	\xi^{\t}_{tT}=\xi(\t(t),T),
\end{equation}
and set the reserving filtration to be the natural filtration of $\{\xi_{tT}^\t\}$.
Then we have
\begin{align}
	\E[\xi^{\t}_{tT}]=\frac{\int_0^t \varepsilon(s)\d s}{\int_0^T \varepsilon(s)\d s} \E[U_{T}]
	\\\intertext{and}
	\frac{\partial^2}{\partial t^2} \E[\xi_{tT}^{\t}]=\frac{\E[U_T]}{\int_0^T\varepsilon(s) \d s}\varepsilon'(t).
\end{align}

\begin{figure}[ht]
	\begin{center}
		\includegraphics[scale=1]{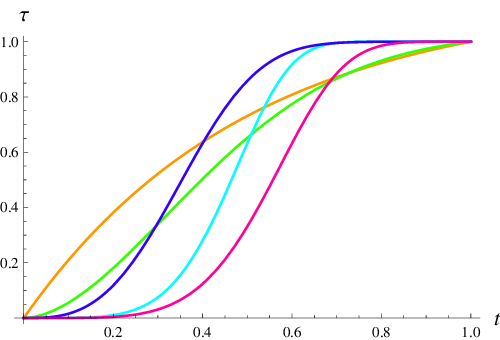}
	\end{center}
	\caption[Craighead curves: truncated Weibull time-changes.]{%
		Plots of the truncated Weibull time change for various parameters, and with $T=1$.
		The expected paid-claims development of the model will have the same profile as $\tau(t)$ (scaled by $\E[U_T]$).
		Hence, under one of the above time changes, when $t$ is close to $T$ the marginal exposure falls, 
		i.e.~$\frac{\partial^2}{\partial t^2} \E[\xi_{tT}^{\tau}]<0$\,.
}
\label{fig:Weibull}
\end{figure}

Craighead \cite{Craig1979} proposed fitting a Weibull distribution function to the development pattern of paid claims for forecasting the ultimate loss (see also Benjamin \& Eagles \cite{BE1997}).
In actuarial work, the Weibull distribution function is often called  the Craighead curve.
To achieve a similar development pattern we can use the Weibull density as the marginal exposure:
\begin{equation}
	\varepsilon(t)=\frac{b}{a}\left(t/a\right)^{b-1} \e^{-(t/a)^b}, \quad a,b>0.
\end{equation}
Then the time change $\tau(t)$ is the renormalised, truncated Weibull distribution function
\begin{equation}
	\tau(t)=T \frac{1-\e^{-(t/a)^b}}{1-\e^{-(T/a)^b}}.
\end{equation}
See Figure \ref{fig:Weibull} for plots of this function.
When $b\leq 1$, $\t'(t)$  is decreasing.
Under such a time change, the marginal exposure is decreasing for all $t\in[0,T]$.
When $b>1$, $\t'(t)$ achieves its maximum at
\begin{equation}
	t^*=a\left(\frac{b-1}{b}\right)^{1/b},
\end{equation}
and $\t'(t)$ is decreasing for $t\geq t^*$.
Thus, if $T>t^*$ then the marginal exposure is decreasing for $t\in[t^*,T]$.
If $T\leq t^*$ then the marginal exposure is increasing for $t\in[0,T]$.

\section{Simulation} \label{sec:Stable_Sim}
\noindent We consider the simulation of sample paths of a stable-$\half$ random bridge.
First, we can generalise (\ref{eq:GT2}) to
\begin{multline}
	\label{eq:simalgo}
	\left[\left. \xi((s+t)/2,T)\,\right| \xi(s,T)=y, \xi(t,T)=z \right]\law 
	 \\y+\half (z-y)\left(1+\frac{Z}{\sqrt{c^2(t-s)^2/(z-y)+Z^2}} \right),
\end{multline}
where $0< s<t\leq T$, and $Z\sim N(0,1)$.

One can then generate a discretised sample path of the form $\{\hat{\xi}(t_i,T)\}_{i=0}^{2^n}$, where $t_i=iT2^{-n}$, by a recursive algorithm of the following form:
	(1) Generate the variate $\hat{\xi}(T,T)$ with law $\nu$, and set $\hat{\xi}(0,T)=0$.
	(2) Generate $\hat{\xi}(\tfrac{1}{2}T,T)$ from $\hat{\xi}(0,T)$ and $\hat{\xi}(T,T)$ by use of the identity (\ref{eq:simalgo}).
	(3) Generate $\hat{\xi}(\tfrac{1}{4}T,T)$ from $\hat{\xi}(0,T)$ and $\hat{\xi}(\tfrac{1}{2}T,T)$,
				and then generate $\hat{\xi}(\tfrac{3}{4}T,T)$ from $\hat{\xi}(\tfrac{1}{2}T,T)$ and $\hat{\xi}(T,T)$.
	(4) Generate $\hat{\xi}(\tfrac{1}{8}T,T)$, $\hat{\xi}(\tfrac{3}{8}T,T)$, $\hat{\xi}(\tfrac{5}{8}T,T)$, $\hat{\xi}(\tfrac{7}{8}T,T)$.
	(5) Iterate.
See Figure \ref{fig:srb} for examples of  simulations.
\begin{figure}[ht]
	\begin{center}
		\subfigure[$c=3$]{\includegraphics[scale=.9]{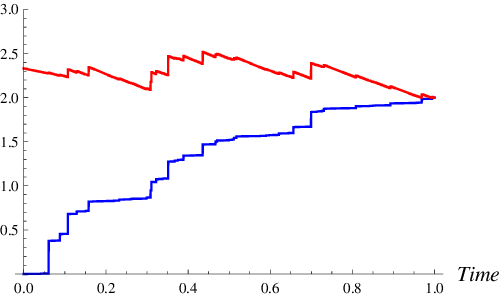}}
		\subfigure[$c=5$]{\includegraphics[scale=.9]{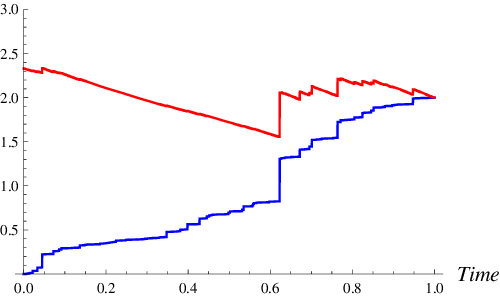}}
		\\
		\subfigure[$c=7$]{\includegraphics[scale=.9]{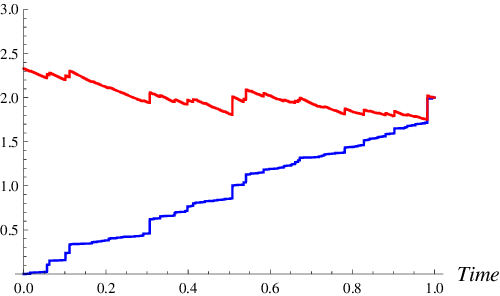}}
		\subfigure[$c=10$]{\includegraphics[scale=.9]{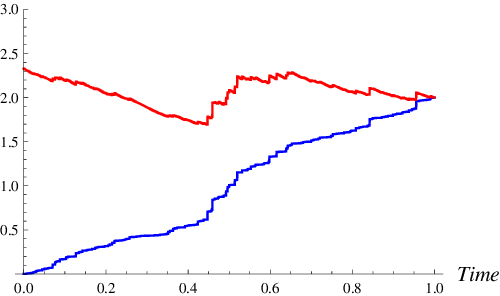}}
	\end{center}
	\caption[Simulations from the stable-$\half$ random bridge reserving model]{%
					Simulations of the paid-claims process $\{\xi_{tT}\}$ (bottom line) and the best-estimate process $\{U_{tT}\}$ (top line). 
					Various values of the activity parameter $c$ are used.
					\emph{A priori}, the ultimate loss $U_T$ has a generalized Pareto distribution (GPD) with density 
					\mbox{$f_{\textit{GPD}}(x)=\1_{\{x>1\}}\left(1+\frac{x-1}{4} \right)^{-5}.$}
					This is the GPD with scale parameter $\s=1$, location parameter $\mu=1$, and shape parameter $\xi=1/4$.}
		\label{fig:srb}
\end{figure}

\section{Multiple lines of business}
\noindent We shall generalise the paid-claims model to achieve two goals:
the first is to allow more than one paid-claims process, and allow dependence between the processes;
the second is to keep the dimensionality of the calculations low with a view to practicality.
The following results can be applied to the modelling of multiple lines of business or multiple origin years when there is dependence between loss processes.
We proceed to consider an example with two paid-claims processes. 
We set $f_t^c(x)=f_t(x)$ as given by (\ref{eq:StableDen}), and $f^c_{tT}(x)=f_{tT}(x)$ as given by (\ref{eq:kernel}). Here we have introduced the superscript to emphasise the dependence on $c$.
Let $\{S(t,T^*)\}$ be a stable-$\half$ random bridge with terminal density $p(z)=\nu(\dd z)/\dd z$, and with activity parameter $c$.
Fix a time $T<T^*$, and define a pair of paid-claims processes by
\begin{align}
	\xi^{(1)}_{tT}&=S(t,T^*) && (0\leq t\leq T),
	\\\xi^{(2)}_{tT}&=k^{2}S(\l t+T,T^*)-k^2 S(T,T^*) && (0\leq t\leq T),
\end{align}
where $\l=T^*/T-1$, and $k=d/(c\l)$ for some $d>0$.
The density of $\xi^{(1)}_{TT}$ is given by
\begin{align}
	p^{(1)}(x)&=f_T^c(x)\int_{0}^{\infty}\frac{f_{T^*-T}^c(z-x)}{f_{T^*}^c(z)}p(z) \d z \nonumber
				\\ &=\int_{0}^{\infty}f_{T,T^*}^c(x;z) \, p(z) \d z,
	\\\intertext{and the density of $\xi^{(2)}_{TT}$ is}
	p^{(2)}(x)&=k^{-2}f_{T^*-T}^c(k^{-2}x)\int_{0}^{\infty}\frac{f_{T}^c(z-k^{-2}x)}{f_{T^*}^c(z)}p(z) \d z \nonumber
				\\ &=k^{-4}f_{T^*-T}^c(k^{-2}x)\int_{0}^{\infty}\frac{f_{T}^c(k^{-2}z-k^{-2}x)}{f_{T^*}^c(k^{-2}z)}p(k^{-2}z) \d z 
				\label{eq:denI}
				\\ &=k^{-4}\int_{0}^{\infty}f_{T^*-T,T^*}^c(k^{-2}x;k^{-2}z)\,p(k^{-2}z) \d z 
				\label{eq:denII}
				\\ &=k^{-2}\int_{0}^{\infty}f_{T,\l^{-1}T^*}^{d}(x;z)\,p(k^{-2}z) \d z.
				\label{eq:denIV}
\end{align}
Here (\ref{eq:denI}) follows after a change of variable, (\ref{eq:denII}) follows from the definition of $f_{tT}(y;z)$ given in (\ref{eq:kernelA}), and (\ref{eq:denIV}) follows from the functional form of $f_{tT}(y;z)$ given in (\ref{eq:kernel}).
It follows from the dynamic consistency property that $\{\xi^{(1)}_{tT}\}$ is a stable-$\half$ random bridge with terminal density $p^{(1)}(z)$ and activity parameter $c$.
Using the dynamic consistency and scaling properties of stable-$\half$ bridges, one can show that $\{\xi^{(2)}_{tT}\}$ is a stable-$\half$ bridge with   terminal density $p^{(2)}(z)$ and activity parameter $d$.
The conditional joint density of $(\xi^{(1)}_{tT},k^{-2}\xi^{(2)}_{tT})$ is
\begin{multline}
	\Q\left[ \xi^{(1)}_{tT}\in\dd y_1, k^{-2} \xi^{(2)}_{tT}\in\dd y_2 \left|\,\xi^{(1)}_{sT}=x_1, k^{-2}\xi^{(2)}_{sT}=x_2 \right.\right]=
	\\ \left\{
	\int_{z=x_1+x_2}^{\infty} \frac{f_{T^*-(1+\l)t}^c(z-(y_1+y_2))}{f^c_{T^*-(1+\l)s}(z-(x_1+x_2))} p(z) \d z \right\}
	f_{t-s}^c(y_1-x_1) \d y_1\, f_{\l (t-s)}^c(y_2-x_2) \d y_2,
\end{multline}
for $0\leq s<t\leq T$.
Then we have
\begin{align}
	&\Q\left[ \xi^{(1)}_{tT}+k^{-2} \xi^{(2)}_{tT}\in\dd y \left|\,\xi^{(1)}_{sT}=x_1, k^{-2}\xi^{(2)}_{sT}=x_2 \right.\right] \nonumber
	\\ &\quad=\left\{ 
		\int_{z=x_1+x_2}^{\infty} \frac{f_{T^*-(1+\l)t}^c(z-y)}{f^c_{T^*-(1+\l)s}(z-(x_1+x_2))} p(z) \d z \right\} 
		f_{(1+\l) (t-s)}^c(y-(x_1+x_2)) \d y \nonumber
	\\ &\quad=\left\{ \int_{z=x_1+x_2}^{\infty} f_{(1+\l)(t-s),T^*-(1+\l)s}^c(z-(x_1+x_2);y-(x_1+x_2)) \, p(z) \d z \right\} \dd y;
\end{align}
and, given $\xi^{(1)}_{sT}=x_1$ and $k^{-2}\xi^{(2)}_{sT}=x_2$, the marginal density of $\xi^{(1)}_{tT}$ is
\begin{align}
	y_1&\mapsto\int_{z=x_1+x_2}^{\infty}f^c_{t-s,T^*-(1+\l)s}(y_1-x_1;z-(x_1+x_2)) \, p(z) \d z,
	\\\intertext{and the marginal density of $k^{-2}\xi^{(2)}_{tT}$ is}
	y_2&\mapsto\int_{z=x_1+x_2}^{\infty}f^c_{\l(t-s),T^*-(1+\l)s}(y_2-x_2;z-(x_1+x_2)) \, p(z) \d z.
\end{align}

\section{Correlation}
The \emph{a priori} correlation between the terminal values is well defined when the second moment of $\nu$ is finite.
The correlation can be used as a tool in the calibration of the model.
Assuming that $\E[S(T^*,T^*)^2]<\infty$, the correlation is defined as
\begin{equation}
	\frac{\E\left[\xi_{TT}^{(1)} \, \xi^{(2)}_{TT} \right]-\E\left[\xi^{(1)}_{TT}\right]\E\left[\xi^{(2)}_{TT}\right]}
	{\sqrt{\left(\E\left[ \left(\xi^{(1)}_{TT}\right)^2 \right]-\E\left[\xi^{(1)}_{TT}\right]^2\right)
	\left(\E\left[ \left(\xi^{(2)}_{TT}\right)^2 \right]-\E\left[\xi^{(2)}_{TT}\right]^2\right)}}.
	\label{eq:correlation}
\end{equation}
We shall calculate each of the components of (\ref{eq:correlation}) separately.
First, we obtain
\begin{equation}
	\label{eq:corr_E1}
	\E\left[\xi^{(1)}_{TT}\right]=\E[S(T,T^*)]=\frac{T}{T^*}\E[S(T^*,T^*)].
\end{equation}
Noting that 
\begin{align}
	\xi^{(2)}_{TT}&=k^2(S(T^*,T^*)-S(T,T^*)) \nonumber
	\\&\law k^2 S(T^*-T,T^*),
\end{align}
we have
\begin{align}
	\E\left[\xi^{(2)}_{TT}\right]&=k^2 \E[S(T^*-T,T^*)] \nonumber
	\\&=k^2\left(1-\frac{T}{T^*}\right)\E[S(T^*,T^*)]. \label{eq:corr_E2}
\end{align}
The second moments of $\xi^{(1)}_{TT}$ and $\xi^{(2)}_{TT}$ follow from (\ref{eq:var}), and are given by
\begin{align}
	\E\left[\left(\xi^{(1)}_{TT}\right)^2\right]&=	\frac{T}{T^*} \E\left[S(T^*,T^*)^2\right]-(T^*-T)C_{T^*}, \label{eq:mom1}
	\\\intertext{and}
	\E\left[\left(\xi^{(2)}_{TT}\right)^2\right]&=k^4 \left(1-\frac{T}{T^*}\right) \E\left[S(T^*,T^*)^2\right]-k^4TC_{T^*},\label{eq:mom2}
\end{align}
where
\begin{equation}
	C_{T^*}=c\sqrt{2\pi}\int_0^{\infty}
						z^{3/2}\,\e^{\frac{c^2T^{*2}}{2z}}\,\Phi\left[-cT^*z^{-1/2}\right] \, p(z) \d z.
\end{equation}
The final term required for working out the correlation is the cross moment.
This is
\begin{align}
	\E\left[\xi^{(1)}_{TT} \, \xi^{(2)}_{TT} \right]
		&= k^2\E\left[S(T,T^*)\left(S(T^*,T^*)-S(T,T^*) \right) \right] \nonumber
		\\ &=k^2\E\left[S(T,T^*)S(T^*,T^*)\right]-k^2\E\left[S(T,T^*)^2 \right]. \label{eq:corr_eq}
\end{align}
The first term on the right of (\ref{eq:corr_eq}) is 
\begin{align}
	k^2\E\left[S(T,T^*)S(T^*,T^*)\right]
		&=k^2\int_0^{\infty}\int_0^{\infty}x\, y \, \frac{f_T^c(x)f_{T^*-T}^c(y-x)}{f_{T^*}^c(y)} \d x \, p(y) \d y \nonumber
		\\ &=k^2\int_0^{\infty}\int_0^{\infty}x \, y\, f_{T,T^*}^c(x;y) \d x \, p(y) \d y \nonumber
		\\ &=k^2\frac{T}{T^*}\int_0^{\infty}y^2 \, p(y) \d y \nonumber
		\\ &=k^2\frac{T}{T^*}\E[S(T^*,T^*)^2].
\end{align}
The second term on the right of (\ref{eq:corr_eq}) is given by (\ref{eq:mom1}).
Hence we have
\begin{equation}
	\label{eq:corr_x}
	\E\left[\xi^{(1)}_{TT} \, \xi^{(2)}_{TT} \right]=k^2(T^*-T)C_{T^*}.
\end{equation}
The expression for the correlation follows from equations (\ref{eq:corr_E1}), (\ref{eq:corr_E2}), (\ref{eq:mom1}), (\ref{eq:mom2}), (\ref{eq:corr_x}).

\section{Ultimate loss estimation}
\noindent In conclusion we estimate the terminal values of the paid-claims processes.
At time $t<T$, the best-estimate ultimate loss of $\{\xi_{tT}^{(1)}\}$ (or, indeed, $\{\xi_{tT}^{(2)}\}$) depends on the two values $\xi_{tT}^{(1)}$ and $\xi_{tT}^{(2)}$.
The best-estimate ultimate loss of $\{\xi_{tT}^{(1)}\}$ is
\begin{align}
	U_{tT}^{(1)}&=\E\left[\xi^{(1)}_{TT} \left|\, \xi^{(1)}_{tT}=x_1, \xi^{(2)}_{tT}=x_2 \right.\right] \nonumber
	\\ &=\E\left[S(T,T^*) \left|\, S(t,T^*)=x_1, S(T+\l t,T^*)-S(T,T^*)=k^{-2}x_2 \right.\right] \nonumber
	\\ &=\E\left[S(T+\l t,T^*)\left|\, S(t,T^*)=x_1, S(T+\l t,T^*)-S(T,T^*)=k^{-2}x_2 \right.\right]-k^{-2}x_2 \nonumber
	\\ &=\E\left[S(T+\l t,T^*) \left|\, S(t,T^*)=x_1, S((1+\l)t,T^*)-S(t,T^*)=k^{-2}x_2 \right.\right]-k^{-2}x_2 \label{eq:usereorder}
	\\ &=\E\left[S(T+\l t,T^*) \left|\, S((1+\l)t,T^*)=x_1+k^{-2}x_2 \right.\right]-k^{-2}x_2 \label{eq:useMarkov}
	\\ &=\frac{T-t}{T^*-(1+\l)t} \left( \E\left[S(T^*,T^*) \left|\, S((1+\l)t,T^*)=x_1+k^{-2}x_2\right.\right]-k^{-2}x_2\right) \nonumber
	\\ &\qquad\qquad +  \frac{T^*-(T-t)}{T^*-(1+\l)t}x_1. \label{eq:useexp}
\end{align}
Equation (\ref{eq:usereorder}) holds since reordering the increments of an LRB gives an LRB with same law,
(\ref{eq:useMarkov}) follows from the Markov property,
and (\ref{eq:useexp}) follows from (\ref{eq:expectation}).
We also have
\begin{equation}
	\E\left[S(T^*,T^*) \left|\, S((1+\l)t,T^*)=x_1+k^{-2}x_2\right.\right]=\int_{0}^{\infty} z \, p_t(z) \d z,
	\label{eq:key_int}
\end{equation}
where
\begin{multline}
	p_t(z)=\1_{\{z>x_1+k^{-2}x_2\}}K^{-1} \left( \frac{z}{z-(x_1+k^{-2}x_2)} \right)^{3/2}
			\\ \times \,	\exp\left( -\half c^2 \left(\frac{(T^*-(1+\l)t)^2}{z-(x_1+k^{-2}x_2)}-\frac{T^{*2}}{z} \right) \right) p(z),
\end{multline}
and $K$ is a constant chosen to normalise the density.
Similarly, the best-estimate ultimate loss of $\{\xi_{tT}^{(2)}\}$ is
\begin{multline}
	U_{tT}^{(2)}= k^2\frac{T^*-(T-t)}{T^*-(1+\l)t}\left( \E\left[S(T^*,T^*) \left|\, S((1+\l)t,T^*)=x_1+k^{-2}x_2\right.\right]-x_1\right) 
	\\ +  \frac{T-t}{T^*-(1+\l)t}x_2.
\end{multline}
To compute $U_{tT}^{(1)}$ and $U_{tT}^{(2)}$ we need to perform at most two one-dimensional integrals: the integral we need is (\ref{eq:key_int}), but $p_t(x)$ includes a normalising constant $K$, which is found by evaluating a second integral.
We are saved the complication of performing double integrals. To extend these results to higher dimensions we can split the ``master" process $\{S_{tT}\}$ into more than two subprocesses.
Regardless of the number of subprocesses (i.e.~paid-claims processes), the best-estimate ultimate losses can be computed by performing at most two one-dimensional integrals.
This makes such a multivariate model computationally efficient.


\begin{acknowledgments}
\noindent The authors are grateful to D.~C.~Brody, H.~B\"uhlmann, A.~J.~G.~Cairns, M.~H.~A.~Davis, R. Norberg, G.~Peskir, and participants at the AMaMeF Conference, \AA lesund, Norway (May 2009), the Research in Options Conference, B\'uzios, Rio de Janeiro (November 2009), and the Sixth World Congress of the Bachelier Finance Society, Toronto (June 2010), where drafts of this paper were presented, for helpful discussions.
This work was carried out, in part, while AM was based at the Department of Mathematics, King's College London, at the Department of
Mathematics, ETH Z\"urich, and at the Institute for Economic Research, Kyoto University, and while EH and LPH were based at Imperial College London. 
EH~acknowledges the support of an EPSRC Doctoral Training Grant.
LPH~acknowledges support from Lloyds TSB, Shell International, the Aspen Center for Physics, and the Fields Institute, Toronto.
AM~acknowledges support from the African Collaboration for Quantitative Finance and Risk Research (ACQuFRR), University of Cape Town. 
\end{acknowledgments}


\vskip 15pt \noindent {\bf References}.
\begin{enumerate} 

\bibitem{AS1964} 
M.~Abramowitz \& I.~A.~Stegun (1964) 
{\em Handbook of Mathematical Functions} (New York: Dover).

\bibitem{EA1989} 
E.~Arjas (1989)
The claims reserving problem in non-life insurance: some structural
	ideas. 
{\em ASTIN Bulletin} {\bf 19}, 139-152. 

\bibitem{HB1970} 
H.~B\"uhlmann  (1970) 
{\em Mathematical Methods in Risk Theory} 
(Heidelberg: Springer).

\bibitem{BG2005} 
H.~B\"uhlmann \& A.~Gisler (2005) 
{\em A Course in Credibility Theory and its Applications} 
(Berlin: Springer).

\bibitem{BE1997} 
S.~Benjamin and L.~M.~Eagles (1997)
A curve fitting method and a regression method. 
In: {\em Claims Reserving Manual}, Volume {\bf 2}, D3. Faculty and Institute of Actuaries.
\bibitem{Bert1996} 
J. Bertoin (1998) 
{\em L{\'e}vy Processes} 
(Cambridge: Cambridge University Press).

\bibitem{BF1972} 
R.~L.~Bornhuetter \& R.~E.~Ferguson (1972)
The actuary and IBNR. 
{\em Proceedings of the Casualty Actuarial Society} {\bf 59}, 181--195. 

\bibitem{BH1} 
D.~C.~Brody, L.~P.~Hughston \& E.~Mackie (2012)
General Theory of Geometric L\'evy Models 
for Dynamic Asset Pricing, {\em Proceedings of the Royal Society A} {\bf 468}, 
1778-1798.  

\bibitem{BHM1} 
D.~C.~Brody, L.~P.~Hughston \& A.~Macrina (2007)
Beyond hazard rates: a new framework for credit-risk modelling. 
In: {\em Advances in Mathematical Finance}, M.~C.~Fu, R.~A.~Jarrow, Ju-Yi~J.~Yen, \& R.~J.~Elliot, eds.
(Boston: Birkh\"auser).	

\bibitem{BHM2} 
D.~C.~Brody, L.~P.~Hughston \& A.~Macrina  (2008)
Information-based asset pricing. 
{\em International Journal of Theoretical and Applied Finance} {\bf 11}, 107-142.
Reprinted in: {\em Finance at Fields}, M.~R.~Grasselli \& L.~P.~Hughston, eds., World Scientific Publishing Company (2013). 

\bibitem{BHM3} 
D.~C.~Brody, L.~P.~Hughston \& A.~Macrina  (2008)
Dam rain and cumulative gain. 
{\em Proceedings of the Royal Society A} {\bf 464}, 1801-1822.
	
\bibitem{CT2004} 
R.~Cont and P.~Tankov  (2004) 
{\em Financial Modelling with Jump Processes} 
(New York: Chapman \& Hall).

\bibitem{Craig1979} 
D. H.~Craighead (1979)
Some aspects of the London reinsurance market in world-wide short-term
	business. 
{\em Journal of the Institute of Actuaries} {\bf 106}, 227-297.

\bibitem{EH2004} 
E.~Eberlein \& E.~A.~von Hammerstein (2004)
Generalized hyperbolic and inverse {Gaussian} distributions: limiting
	cases and approximation of processes. 
In: {\em Seminar on Stochastic Analysis, Random Fields and Applications} IV, R.~Dalang, M.~Dozzi \& F.~Russo, eds.~(Basel: Birkh\"auser).	

\bibitem{EKM1997} 
P.~Embrechts, C.~Kl\"uppelberg \& T.~Mikosch (1997)
{\em Modelling Extremal Events for Insurance and Finance} (Berlin: Springer).	

\bibitem{EV2002} 
P.~D.~England \& R.~J.~Verrall (2002)
Stochastic Claims Reserving.
{\em British Actuarial Journal} {\bf 8}, 443-544.

\bibitem{Feller2} 
W.~Feller (1971) 
{\em An Introduction to Probability Theory and its Applications} II 
(New York: Wiley).

\bibitem{H2010} 
E.~Hoyle (2010) 
{\em Information-Based Models for Finance and Insurance}. 
PhD thesis, Department of Mathematics, Imperial College London. arXiv: 1010.0829.

\bibitem{HHM1} 
E.~Hoyle, L.~P.~Hughston \& A.~Macrina (2011)
L\'evy random bridges and the modelling of financial information. 
{\em Stochastic Processes and their Applications} {\bf 121}, 856--884.

\bibitem{Jorg1982} 
B.~J\o rgensen (1971) 
{\em Statistical Properties of the Generalized Inverse Gaussian Distribution}, 
Lecture Notes in Statistics {\bf 9} (New York: Springer).

\bibitem{Kyp2006} 
A.~E.~Kyprianou (2006) 
{\em Introductory Lectures on the Fluctuations of {L\'evy} Processes with
	Applications} (Berlin: Springer).
	
\bibitem{MPhD2006} 
A.~Macrina (2006) 
{\em An Information-Based Framework for Asset Pricing}. 
PhD thesis, Department of Mathematics, King's College London. arXiv: 0807.2124.

\bibitem{BBM1963} 
B.~B.~Mandelbrot (1963)
The variation of certain spectulative prices. 
{\em Journal of Business} {\bf 36}, 394--419.

\bibitem{MFE2005} 
A.~J.~McNeil, R.~Frey \& P.~Embrechts (2005) 
{\em Quantitative Risk Management} (Princeton, New Jersey: Princeton University Press).

\bibitem{TK2004} 
T.~Mikosch (2004) 
{\em Non-Life Insurance Mathematics: an Introduction with Stochastic Processes} (Berlin: Springer).

\bibitem{RN1} 
R.~Norberg (1993)
Prediction of outstanding liabilities in non-life insurance. 
{\em ASTIN Bulletin} {\bf 23}, 95--115. 

\bibitem{RN2} 
R.~Norberg (1999)
Prediction of outstanding liabilities II: model variations and
	extensions. 
{\em ASTIN Bulletin} {\bf 29}, 5-25. 

\bibitem{Sato1999} 
K.  Sato (1999) 
{\em L\'evy Processes and Infintely Divisble Distributions} (Cambridge: Cambridge University Press).

\bibitem{Schoutens2004} 
W.~Schoutens (2004) 
{\em L\'evy Processes in Finance} (New York: Wiley).

\bibitem{WM2008} 
M.~V.~W\"uthrich and M.~Merz (2008) 
{\em Stochastic Claims Reserving Methods in Insurance} (Chichester: Wiley).

\bibitem{WM2013} 
M.~V.~W\"uthrich and M.~Merz (2013) {\em Financial Modeling, Actuarial Valuation and Solvency in Insurance} (Berlin: Springer).

\end{enumerate}


\end{document}